\DeclareMathAlphabet{\mathcal}{OMS}{cmsy}{m}{n}
\newcommand{\omitteddisc}{}
\newcommand{\appendix\section{Stand-alone proof of Theorem~\ref{thm:possctotmuxinddrm}}
\label{apx:possctotmuxinddrmproof}
We prove the claim by representing ordered trees as words (essentially
following a SAX traversal). First, encode $W$ to a word $e_W$ by such a traversal,
internal nodes with label $a$ being encoded as $a^l C a^r$ where $C$ is the
sequence of the encodings of the children of the node (following their order),
and leaves with label $a$ being encoded as $a^l a^r$.

We now convert $D$ to a weighted non-deterministic automaton $A_D$ (on
words) with
$\epsilon$-transitions; importantly, this automaton is acyclic. We proceed in
the following way. Encode a regular node $n$ with label $a$ as the following
structure: the initial state~$q_i$, the encoding
of the children $(c_i)$ of $n$ in order (the final state of each one being connected to
the initial state of the next one by an $\epsilon$-transition of weight $1$),
the final state $q_f$, and an edge labeled~$a^l$ with probability $1$ from $q_i$
to the initial state of the encoding of $c_1$ (if it exists, otherwise to some
intermediate state~$q$)
and an edge labeled~$a^r$ with probability~$1$ from the final state of the
encoding of the last child (if it exists, otherwise from~$q$) to~$q_f$.

Encode the $\drm$ nodes in the same way except that the two last edges are
labeled by~$\epsilon$ (instead of~$a^l$ and~$a^r$). Encode an $\ind$ node~$n$ like a regular node except that
edges leading to the initial state of the encoding of a child of $n$ are given a
probability~$p$ (the probability of this child) and we add an additional edge
with label~$\epsilon$ and probability~$1-p$ to the same initial state to the
final state of the encoding of that child (corresponding to the choice of not
retaining this child). Encode a $\mux$ node as an initial state $q_i$, an
initial state $q_f$, the encoding of each child in parallel,
$\epsilon$-transitions with probability $1$ from the final state of the encoding
of the children to $q_f$, and $\epsilon$-transitions with adequate probabilities
from~$q_i$ to the initial state of the encoding of each child (or to $q_f$, to
make the probabilities sum to~$1$).

There is a clear correspondence between runs of $A_D$ and possible worlds of
$D$, so that what we have to compute is the probability of the encoding $e_W$ of
$W$ according to~$A_D$.

Now, because $A_D$ is acyclic, it is easy to compute this probability in
polynomial time. Indeed, we can compute dynamically for each state $q$ of $A_D$
and every suffix $s$ of $e_W$ the probability that $s$ is produced by a run from
$q$ to the final state of $A_D$.

The base case is that, at the final state, we produce the empty suffix with
probability~$1$ and any non-empty suffix with probability~$0$.

Now, when considering a non-final state $q$ and suffix $s$, because by
construction the sum of all outgoing transitions of $q$ is $1$, the
probability $p(q, s)$ of producing $s$ from $q$ is computed by summing, for
every
outgoing transition~$a$ starting at state~$q$ (with target state~$q_i$), the probability of the
transition~$a$ multiplied by the following quantity:
either, if $a$ is an $\epsilon$-transition, the value
$p(q_i, s)$ (which was already computed) or, if $a$ has label $x$, either $0$ if
$|s| = 0$ or the first letter of $s$ is not~$x$, or otherwise the value $p(q_i, s')$
(which was already computed) where $s'$ is the suffix of length $|s|-1$ of
$e_W$.

\section{Stand-alone proof of Theorem~\ref{thm:posscemuxinddrm}}
\label{apx:posscemuxinddrmproof}
  
As in the other proof of this result, it suffices to consider a single candidate
match, as the overall probability can be obtained by summing that of each match,
and the decision problem can be solved by considering matches separately. If $D$
and $W$ are ordered, we can assume that matches which do not satisfy the order
constraints have been discarded. For simplicity we relax the restriction that
the probability of edges is always $<1$, so that we can encode $\drm$ nodes as
$\ind$ nodes and consider only $\ind$ and $\mux$ nodes.

\medskip

Let us first prove that $\posse{\top}{\muxinddrm}$ can be solved in polynomial
time.

Consider a candidate match $f$. Because all probabilistic choices are
independent, it is clear that all nodes of $D$ in the image $I$ of $f$ can be
kept if and only if there is no $\mux$ node $n$ such that $n'$ and $n''$ are in
$I$ and $n'$, $n''$ are descendants of two distinct children of $n$. Indeed,
this condition is clearly necessary, and, except for this, all choices are
independent within $I$ so they can all be made to succeed\footnote{Remember that
there are no edges with probability $0$.} so that the nodes of $I$ are retained.

So, assuming that this condition holds (it can be checked in polynomial time), the
question is only to see whether the nodes not in $I$ can be discarded. To check
this, we define recursively on all nodes $n$ of $D$, in a bottom-up fashion, the
Boolean value $e(n)$ indicating if $n$ can be ``empty'', that is, if there is a
possible world rooted at $n$ that is empty.

If $n$ is a regular node then we define $e(n) = \false$.

If $n$ is an $\ind$ node, we define $e(n) = \true$ if and only if $e(n)$ is
$\true$ for all the children of $n$ with edge probability~$1$ (remember
that we relaxed the condition on probabilities being~$<1$ because we encoded
$\drm$ nodes as $\ind$ nodes). In particular, if $n$ has no children with edge
probability~$1$, we define $e(n) = \true$.

If $n$ is a $\mux$ node, we define $e(n) = \true$ if and only if the
probabilities of~$n$ sum up to~$<1$, or there exists a child of~$n$ such that
$e(n)$ is $\true$.

Now, we can use $e$ to express the fact that it should be possible to discard
all regular nodes of $D$ except those in $I$. To do so, by a slight abuse of
terms, we say that a probabilistic node is in $I$ if it has a regular descendant
that is in $I$. Now, we claim that the match can yield $W$ if and only if, for every topmost
node $n$ not in $I$, either $e(n)$ is true, or the parent of $n$ (which by
definition is in $I$) is a $\mux$ node or is an $\ind$ node $n'$ such that the
edge from~$n'$ to~$n$ is labeled with a probability~$<1$. To see why this claim
holds, observe that, if this condition is respected, all nodes not in~$I$ can be
discarded (either because $e(n)$ is $\true$ so we can choose an empty subtree as
the possible world rooted at them, or by deciding to discard them at the level
of their parent -- for $\mux$ nodes, in fact, we have no choice but to discard
them). Conversely, if this condition does not hold for a node $n'$ of~$D$,
then~$n'$ must be kept, and the possible world chosen for the subtree rooted at
$n'$ will have to be non-empty because $e(n')$ is $\false$.

This condition can be tested in polynomial time. Hence,
$\posse{\top}{\muxinddrm}$ can be solved in polynomial time by checking if one
of the matches is acceptable in this sense.

\medskip

Let us now prove that $\possce{\top}{\muxinddrm}$ can be solved in polynomial
time. We assume that candidate matches are filtered (in polynomial time)
according to the process described above, so as to only keep the matches with
probability $>0$.

Now, the probability that the match~$f$ is realized can be computed as the
probability of keeping its image $I$ (including probabilistic nodes like in the
previous proof), times the probability of discarding the other nodes: indeed, as
$I$ is a rooted subtree, we must first decide outcomes of nodes and edges in
this subtree so that $I$~is kept, and then outcomes such that the rest is
discarded.

It is easily seen that the probability $p_+$ that $I$ is kept is the product of
all probabilities that annotate the edges that are between nodes in $I$: all
$\ind$ edges of the match must be kept (and those draws are performed
independently), and the right $\mux$ edges must always have been chosen
(remember that a $\mux$ node $n$ is in $I$ only if it has a descendant in~$I$,
and by the condition that the match probability is~$>0$ all descendants of~$n$
are descendants of the same child of~$n$).

Now, we must compute the probability $p_-$ that the nodes not in $I$ are
discarded. To do so, we define $e(\cdot)$, as in the previous proof, but as a
probability rather than a Boolean value: $e(n)$ is the probability of the empty
subtree among the possible worlds for the subtrees rooted at $e(n)$ (note that
this probability does not depend on the choices performed elsewhere in the
tree). Once again, we compute $e(\cdot)$ bottom-up.

For a regular node $n$, we define $e(n) = 0$.

For a $\mux$ node $n$, we define $e(n) = (\sum_i p_i e(n_i)) + (1 - \sum_i p_i)$
where the $n_i$ are the children of $n$ and the $p_i$ the corresponding edge
labels. Intuitively, the probability of the $\mux$ to be empty is that of its
children being empty, weighted by their probability, plus the probability that
we select no children (when the probabilities sum to $<1$).

For an $\ind$ node $n$, we define $e(n) = \prod_i ((1 - p_i) + p_i e(n_i))$ with the
same notation. Intuitively, the probability of the $\ind$ to be empty is that
of each child subtree being missing or empty, which occurs either when the
corresponding is removed, or when it is kept but the subtree is empty (summing
those two cases are they are mutually exclusive).

Now, all nodes not in $I$ are discarded if and only if, for each topmost node $n$ not in
$I$, either $n$ is dropped (its parent edge is removed) or the possible world
rooted at~$n$ is empty. This is a conjunction of events, and they are
independent once conditioned by the fact that the nodes in $I$ are kept
(so the outcomes of all $\mux$ nodes in $I$ have already been decided), so we
can compute the overall probability of $f$ as $p_+ p_-$, with $p_-$ being the
product of the probability $p_n$, for all topmost nodes $n$ not in $I$, that $n$ is
dropped or the subtree rooted at~$n$ is empty. Consider $n'$ the parent of~$n$:
the probability $p_n$ is $e(n)$
if $n'$ is a regular node (as $n$ cannot be dropped then), is $1$ if $n'$ is a
$\mux$ node (as $n'$ is in $I$, it has a descendant~$n''$ in $I$, which cannot
be a descendant of $n$ as $n$ is not in $I$, so that when deciding to keep~$n''$
we have already decided that $n$ would be dropped), and it is $p e(n) + (1-p)$
if $n'$ is an $\ind$ node and the probability of the edge from $n'$ to $n$ is
$p$.

Hence, the overall probability, $p_+ p_-$, can be computed in polynomial time,
which concludes the proof.

\section{$\prxml^{\mie}$ does not capture $\prxml^{\muxdrm}$}
\label{apx:rewriteproof}

In this section, we show that $\prxml^{\mie}$ is \emph{not} more general than
$\prxml^{\muxdrm}$, namely, there is no PTIME encoding from $\prxml^{\muxdrm}$
documents to $\prxml^{\mie}$ documents.

\medskip

Consider the $\prxml^{\muxdrm}$ document~$D_n$ with root labeled~$\top$ and one
$\mux$ child that has two children (with edge probabilities $1/2$): one regular
child with label~$c$, and one $\det$ child. The $\det$ node has $n$ $\mux$
children: for all~$i$, the $i$-th of them has edge probabilities $1/2$ and two
regular children with labels~$a_i$ and $b_i$. We show that any encoding of~$D_n$
to a $\prxml^{\mie}$ document~$D'_n$ (having the same possible worlds as $D_n$)
must have size exponential in~$n$.

The document $D_n'$ must have root labeled $\top$, and the root clearly cannot
have any regular children; so it must have $\mie$ children, and without loss of
generality it has only one of them. Now, as $\mie$ hierarchies are not
permitted, all children of this $\mie$ node are regular nodes; clearly they
cannot have any regular children, and without loss of generality they have no
(useless) $\mie$ children. So the only thing to define is the label and edge
labels of the children of this unique $\mie$ node. Without loss of generality we
assume that we remove edges labeled with $(e, v)$ where $e=v$ has
probability~$0$. Clearly the node labels can be assumed to be either~$c$
or~$a_i$ or~$b_i$ for some~$i$.

As all possible worlds of~$D'_n$ must contain at most one node labeled~$c$, we
claim that the parent edge of all child nodes with label~$c$ must be labeled
with the same event~$e$. Indeed, if there are two nodes with label~$c$ and with
edge labels~$(e_1, v_1)$ and~$(e_2, v_2)$ with $e_1 \neq e_2$, because $e_1$ and
$e_2$ are independent and (we assumed) the events $e_1 = v_1$ and $e_2 = v_2$
have probability~$>0$, any valuation where $e_1 = v_1$ and $e_2 = v_2$ yields a
possible world with two~$c$ children, a contradiction. Hence all child nodes
with label~$c$ are labeled with the same event~$e$. Note that, as some possible
world of~$D'_n$ must contain a node labeled~$c$, there has to be at least one
child~$n_c$ with label~$c$.

Now, no possible world of~$D'_n$ contains both a child labeled~$a_i$ or~$b_i$
(for any~$i$) and a child labeled~$c$, so we claim that the parent edge of all
child nodes with label~$a_i$ or~$b_i$ (for any~$i$) must be labeled with this
same event~$e$. Indeed, assume that one such node is labeled with some condition
$(e', x)$ with $e' \neq e$; calling $(e, v)$ the edge label of~$n_c$, any
valuation where $e' = x$ and $e = v$ yields a possible world with a~$c$ node and
an~$a_i$ node or a~$b_i$ node for some~$i$, a contradiction. Hence, in fact, all
child nodes of the $\mie$ node are labeled with the same event $e$.

Now, as $D_n$ has $2^{n}+1$ possible worlds, $e$ must have $\Omega(2^{n})$
different possible values. Hence, $D'_n$ is of size exponential in~$n$. This
concludes the proof.

}{}
\newcommand{\pageonenum}{}
\renewcommand{\appendix}{\appendix\section{Stand-alone proof of Theorem~\ref{thm:possctotmuxinddrm}}
\label{apx:possctotmuxinddrmproof}
We prove the claim by representing ordered trees as words (essentially
following a SAX traversal). First, encode $W$ to a word $e_W$ by such a traversal,
internal nodes with label $a$ being encoded as $a^l C a^r$ where $C$ is the
sequence of the encodings of the children of the node (following their order),
and leaves with label $a$ being encoded as $a^l a^r$.

We now convert $D$ to a weighted non-deterministic automaton $A_D$ (on
words) with
$\epsilon$-transitions; importantly, this automaton is acyclic. We proceed in
the following way. Encode a regular node $n$ with label $a$ as the following
structure: the initial state~$q_i$, the encoding
of the children $(c_i)$ of $n$ in order (the final state of each one being connected to
the initial state of the next one by an $\epsilon$-transition of weight $1$),
the final state $q_f$, and an edge labeled~$a^l$ with probability $1$ from $q_i$
to the initial state of the encoding of $c_1$ (if it exists, otherwise to some
intermediate state~$q$)
and an edge labeled~$a^r$ with probability~$1$ from the final state of the
encoding of the last child (if it exists, otherwise from~$q$) to~$q_f$.

Encode the $\drm$ nodes in the same way except that the two last edges are
labeled by~$\epsilon$ (instead of~$a^l$ and~$a^r$). Encode an $\ind$ node~$n$ like a regular node except that
edges leading to the initial state of the encoding of a child of $n$ are given a
probability~$p$ (the probability of this child) and we add an additional edge
with label~$\epsilon$ and probability~$1-p$ to the same initial state to the
final state of the encoding of that child (corresponding to the choice of not
retaining this child). Encode a $\mux$ node as an initial state $q_i$, an
initial state $q_f$, the encoding of each child in parallel,
$\epsilon$-transitions with probability $1$ from the final state of the encoding
of the children to $q_f$, and $\epsilon$-transitions with adequate probabilities
from~$q_i$ to the initial state of the encoding of each child (or to $q_f$, to
make the probabilities sum to~$1$).

There is a clear correspondence between runs of $A_D$ and possible worlds of
$D$, so that what we have to compute is the probability of the encoding $e_W$ of
$W$ according to~$A_D$.

Now, because $A_D$ is acyclic, it is easy to compute this probability in
polynomial time. Indeed, we can compute dynamically for each state $q$ of $A_D$
and every suffix $s$ of $e_W$ the probability that $s$ is produced by a run from
$q$ to the final state of $A_D$.

The base case is that, at the final state, we produce the empty suffix with
probability~$1$ and any non-empty suffix with probability~$0$.

Now, when considering a non-final state $q$ and suffix $s$, because by
construction the sum of all outgoing transitions of $q$ is $1$, the
probability $p(q, s)$ of producing $s$ from $q$ is computed by summing, for
every
outgoing transition~$a$ starting at state~$q$ (with target state~$q_i$), the probability of the
transition~$a$ multiplied by the following quantity:
either, if $a$ is an $\epsilon$-transition, the value
$p(q_i, s)$ (which was already computed) or, if $a$ has label $x$, either $0$ if
$|s| = 0$ or the first letter of $s$ is not~$x$, or otherwise the value $p(q_i, s')$
(which was already computed) where $s'$ is the suffix of length $|s|-1$ of
$e_W$.

\section{Stand-alone proof of Theorem~\ref{thm:posscemuxinddrm}}
\label{apx:posscemuxinddrmproof}
  
As in the other proof of this result, it suffices to consider a single candidate
match, as the overall probability can be obtained by summing that of each match,
and the decision problem can be solved by considering matches separately. If $D$
and $W$ are ordered, we can assume that matches which do not satisfy the order
constraints have been discarded. For simplicity we relax the restriction that
the probability of edges is always $<1$, so that we can encode $\drm$ nodes as
$\ind$ nodes and consider only $\ind$ and $\mux$ nodes.

\medskip

Let us first prove that $\posse{\top}{\muxinddrm}$ can be solved in polynomial
time.

Consider a candidate match $f$. Because all probabilistic choices are
independent, it is clear that all nodes of $D$ in the image $I$ of $f$ can be
kept if and only if there is no $\mux$ node $n$ such that $n'$ and $n''$ are in
$I$ and $n'$, $n''$ are descendants of two distinct children of $n$. Indeed,
this condition is clearly necessary, and, except for this, all choices are
independent within $I$ so they can all be made to succeed\footnote{Remember that
there are no edges with probability $0$.} so that the nodes of $I$ are retained.

So, assuming that this condition holds (it can be checked in polynomial time), the
question is only to see whether the nodes not in $I$ can be discarded. To check
this, we define recursively on all nodes $n$ of $D$, in a bottom-up fashion, the
Boolean value $e(n)$ indicating if $n$ can be ``empty'', that is, if there is a
possible world rooted at $n$ that is empty.

If $n$ is a regular node then we define $e(n) = \false$.

If $n$ is an $\ind$ node, we define $e(n) = \true$ if and only if $e(n)$ is
$\true$ for all the children of $n$ with edge probability~$1$ (remember
that we relaxed the condition on probabilities being~$<1$ because we encoded
$\drm$ nodes as $\ind$ nodes). In particular, if $n$ has no children with edge
probability~$1$, we define $e(n) = \true$.

If $n$ is a $\mux$ node, we define $e(n) = \true$ if and only if the
probabilities of~$n$ sum up to~$<1$, or there exists a child of~$n$ such that
$e(n)$ is $\true$.

Now, we can use $e$ to express the fact that it should be possible to discard
all regular nodes of $D$ except those in $I$. To do so, by a slight abuse of
terms, we say that a probabilistic node is in $I$ if it has a regular descendant
that is in $I$. Now, we claim that the match can yield $W$ if and only if, for every topmost
node $n$ not in $I$, either $e(n)$ is true, or the parent of $n$ (which by
definition is in $I$) is a $\mux$ node or is an $\ind$ node $n'$ such that the
edge from~$n'$ to~$n$ is labeled with a probability~$<1$. To see why this claim
holds, observe that, if this condition is respected, all nodes not in~$I$ can be
discarded (either because $e(n)$ is $\true$ so we can choose an empty subtree as
the possible world rooted at them, or by deciding to discard them at the level
of their parent -- for $\mux$ nodes, in fact, we have no choice but to discard
them). Conversely, if this condition does not hold for a node $n'$ of~$D$,
then~$n'$ must be kept, and the possible world chosen for the subtree rooted at
$n'$ will have to be non-empty because $e(n')$ is $\false$.

This condition can be tested in polynomial time. Hence,
$\posse{\top}{\muxinddrm}$ can be solved in polynomial time by checking if one
of the matches is acceptable in this sense.

\medskip

Let us now prove that $\possce{\top}{\muxinddrm}$ can be solved in polynomial
time. We assume that candidate matches are filtered (in polynomial time)
according to the process described above, so as to only keep the matches with
probability $>0$.

Now, the probability that the match~$f$ is realized can be computed as the
probability of keeping its image $I$ (including probabilistic nodes like in the
previous proof), times the probability of discarding the other nodes: indeed, as
$I$ is a rooted subtree, we must first decide outcomes of nodes and edges in
this subtree so that $I$~is kept, and then outcomes such that the rest is
discarded.

It is easily seen that the probability $p_+$ that $I$ is kept is the product of
all probabilities that annotate the edges that are between nodes in $I$: all
$\ind$ edges of the match must be kept (and those draws are performed
independently), and the right $\mux$ edges must always have been chosen
(remember that a $\mux$ node $n$ is in $I$ only if it has a descendant in~$I$,
and by the condition that the match probability is~$>0$ all descendants of~$n$
are descendants of the same child of~$n$).

Now, we must compute the probability $p_-$ that the nodes not in $I$ are
discarded. To do so, we define $e(\cdot)$, as in the previous proof, but as a
probability rather than a Boolean value: $e(n)$ is the probability of the empty
subtree among the possible worlds for the subtrees rooted at $e(n)$ (note that
this probability does not depend on the choices performed elsewhere in the
tree). Once again, we compute $e(\cdot)$ bottom-up.

For a regular node $n$, we define $e(n) = 0$.

For a $\mux$ node $n$, we define $e(n) = (\sum_i p_i e(n_i)) + (1 - \sum_i p_i)$
where the $n_i$ are the children of $n$ and the $p_i$ the corresponding edge
labels. Intuitively, the probability of the $\mux$ to be empty is that of its
children being empty, weighted by their probability, plus the probability that
we select no children (when the probabilities sum to $<1$).

For an $\ind$ node $n$, we define $e(n) = \prod_i ((1 - p_i) + p_i e(n_i))$ with the
same notation. Intuitively, the probability of the $\ind$ to be empty is that
of each child subtree being missing or empty, which occurs either when the
corresponding is removed, or when it is kept but the subtree is empty (summing
those two cases are they are mutually exclusive).

Now, all nodes not in $I$ are discarded if and only if, for each topmost node $n$ not in
$I$, either $n$ is dropped (its parent edge is removed) or the possible world
rooted at~$n$ is empty. This is a conjunction of events, and they are
independent once conditioned by the fact that the nodes in $I$ are kept
(so the outcomes of all $\mux$ nodes in $I$ have already been decided), so we
can compute the overall probability of $f$ as $p_+ p_-$, with $p_-$ being the
product of the probability $p_n$, for all topmost nodes $n$ not in $I$, that $n$ is
dropped or the subtree rooted at~$n$ is empty. Consider $n'$ the parent of~$n$:
the probability $p_n$ is $e(n)$
if $n'$ is a regular node (as $n$ cannot be dropped then), is $1$ if $n'$ is a
$\mux$ node (as $n'$ is in $I$, it has a descendant~$n''$ in $I$, which cannot
be a descendant of $n$ as $n$ is not in $I$, so that when deciding to keep~$n''$
we have already decided that $n$ would be dropped), and it is $p e(n) + (1-p)$
if $n'$ is an $\ind$ node and the probability of the edge from $n'$ to $n$ is
$p$.

Hence, the overall probability, $p_+ p_-$, can be computed in polynomial time,
which concludes the proof.

\section{$\prxml^{\mie}$ does not capture $\prxml^{\muxdrm}$}
\label{apx:rewriteproof}

In this section, we show that $\prxml^{\mie}$ is \emph{not} more general than
$\prxml^{\muxdrm}$, namely, there is no PTIME encoding from $\prxml^{\muxdrm}$
documents to $\prxml^{\mie}$ documents.

\medskip

Consider the $\prxml^{\muxdrm}$ document~$D_n$ with root labeled~$\top$ and one
$\mux$ child that has two children (with edge probabilities $1/2$): one regular
child with label~$c$, and one $\det$ child. The $\det$ node has $n$ $\mux$
children: for all~$i$, the $i$-th of them has edge probabilities $1/2$ and two
regular children with labels~$a_i$ and $b_i$. We show that any encoding of~$D_n$
to a $\prxml^{\mie}$ document~$D'_n$ (having the same possible worlds as $D_n$)
must have size exponential in~$n$.

The document $D_n'$ must have root labeled $\top$, and the root clearly cannot
have any regular children; so it must have $\mie$ children, and without loss of
generality it has only one of them. Now, as $\mie$ hierarchies are not
permitted, all children of this $\mie$ node are regular nodes; clearly they
cannot have any regular children, and without loss of generality they have no
(useless) $\mie$ children. So the only thing to define is the label and edge
labels of the children of this unique $\mie$ node. Without loss of generality we
assume that we remove edges labeled with $(e, v)$ where $e=v$ has
probability~$0$. Clearly the node labels can be assumed to be either~$c$
or~$a_i$ or~$b_i$ for some~$i$.

As all possible worlds of~$D'_n$ must contain at most one node labeled~$c$, we
claim that the parent edge of all child nodes with label~$c$ must be labeled
with the same event~$e$. Indeed, if there are two nodes with label~$c$ and with
edge labels~$(e_1, v_1)$ and~$(e_2, v_2)$ with $e_1 \neq e_2$, because $e_1$ and
$e_2$ are independent and (we assumed) the events $e_1 = v_1$ and $e_2 = v_2$
have probability~$>0$, any valuation where $e_1 = v_1$ and $e_2 = v_2$ yields a
possible world with two~$c$ children, a contradiction. Hence all child nodes
with label~$c$ are labeled with the same event~$e$. Note that, as some possible
world of~$D'_n$ must contain a node labeled~$c$, there has to be at least one
child~$n_c$ with label~$c$.

Now, no possible world of~$D'_n$ contains both a child labeled~$a_i$ or~$b_i$
(for any~$i$) and a child labeled~$c$, so we claim that the parent edge of all
child nodes with label~$a_i$ or~$b_i$ (for any~$i$) must be labeled with this
same event~$e$. Indeed, assume that one such node is labeled with some condition
$(e', x)$ with $e' \neq e$; calling $(e, v)$ the edge label of~$n_c$, any
valuation where $e' = x$ and $e = v$ yields a possible world with a~$c$ node and
an~$a_i$ node or a~$b_i$ node for some~$i$, a contradiction. Hence, in fact, all
child nodes of the $\mie$ node are labeled with the same event $e$.

Now, as $D_n$ has $2^{n}+1$ possible worlds, $e$ must have $\Omega(2^{n})$
different possible values. Hence, $D'_n$ is of size exponential in~$n$. This
concludes the proof.

}
\title{The Possibility Problem for Probabilistic XML\\(Extended Version)}
\renewcommand{\omitteddisc}{
  
  This paper is the complete version (including
  proofs) of work initially submitted as an extended abstract (without proofs) at
the AMW~2014 workshop~\cite{amarilli2014possibility} and subsequently submitted (with proofs) at the
BDA~2014 conference (no formal proceedings). This version integrates the
feedback from both rounds of reviews.}
\renewcommand{\pageonenum}{\thispagestyle{plain}}
\let\oldproof\proof
\let\oldendproof\endproof
\renewenvironment{proof}[1][\proofname]{
  \oldproof%
}{\qed \oldendproof}
\newcommand{\prxmlclass}[1]{\mathsf{#1}}
\newcommand{\fie}{\prxmlclass{fie}}
\newcommand{\cie}{\prxmlclass{cie}}
\newcommand{\mie}{\prxmlclass{mie}}
\newcommand{\muxind}{\prxmlclass{mux, ind}}
\newcommand{\mux}{\prxmlclass{mux}}
\newcommand{\ind}{\prxmlclass{ind}}
\newcommand{\inddrm}{\prxmlclass{ind, det}}
\newcommand{\muxdrm}{\prxmlclass{mux, det}}
\newcommand{\muxinddrm}{\prxmlclass{mux, ind, det}}
\newcommand{\drm}{\prxmlclass{det}}
\newcommand{\xpn}{\prxmlclass{exp}}
\newcommand{\tot}{<}
\newcommand{\emp}{\not<}
\newcommand{\pposs}{\textsc{Poss}\xspace}
\newcommand{\ppossc}{\#\pposs}
\newcommand{\pposse}{\textsc{EPoss}\xspace}
\newcommand{\ppossce}{\#\pposse}
\newcommand{\poss}[2]{\pposs_{#1}^{#2}}
\newcommand{\possc}[2]{\#\pposs_{#1}^{#2}}
\newcommand{\posse}[2]{\pposse_{#1}^{#2}}
\newcommand{\possce}[2]{\#\pposse_{#1}^{#2}}
\newcommand{\prxml}{\mathsf{PrXML}}
\newcommand{\fpsp}{\text{FP}^{\text{\#P}}}
\DeclareMathOperator{\supp}{supp}
\newcommand{\true}{\mathfrak{t}}
\newcommand{\false}{\mathfrak{f}}
\begin{document}

\author{Antoine Amarilli\vspace{-0.6em}}
\institute{Télécom ParisTech; Institut Mines-Télécom; CNRS LTCI}

\maketitle
\pageonenum

\begin{abstract}
  \vspace{-0.8em}
  We consider the \emph{possibility problem} of determining if a
  document is a possible world of a probabilistic document, in the setting of
  probabilistic XML. This basic question is a special case of query
  answering or tree automata evaluation, but it has specific practical
  uses, such as checking whether an user-provided probabilistic document outcome
  is possible or sufficiently likely.

  In this paper, we study the complexity of the possibility problem for
  probabilistic XML models of varying expressiveness. We show that the decision
  problem is often tractable in the absence of long-distance dependencies, but
  that its computation variant is intractable on unordered documents. We also
  introduce an \emph{explicit matches} variant to generalize practical
  situations where node labels are unambiguous; this ensures tractability of the
  possibility problem, even under long-distance dependencies, provided event
  conjunctions are disallowed. Our results entirely classify the tractability
  boundary over all considered problem variants.
  \vspace{-0.6em}
\end{abstract}

\section{Introduction}

Probabilistic representations are a way to represent incomplete knowledge
through a concise description of a large set of possible worlds annotated
with their probability. Such models can then be used, e.g., to run a query
efficiently over all possible worlds and determine the overall probability that
the query holds.
Probabilistic representations have been successfully used both for the relational
model~\cite{suciu2011probabilistic} and for XML
documents~\cite{kimelfeld2013probabilistic}.

Many problems, such as query answering~\cite{kimelfeld2009query},
have been studied over such representations; however,
to our knowledge, the
\emph{possibility problem} ($\pposs$) has not been specifically studied: given a
probabilistic document~$D$ and a deterministic document~$W$, decide if $W$
is a possible world of~$D$, and optionally compute its probability according to~$D$. 
This can be asked both of relational and XML probabilistic
representations, but we focus on XML documents
because they pose many challenges: they are hierarchical so some
probabilistic choices appear dependent\footnote{In fact, we will see that our
hardness results always hold even for shallow documents.}; documents may be
ordered; bag semantics must be used to count multiple sibling nodes with the
same label.
In addition, in the XML setting, the $\pposs$ problem is a
natural question that arises in practical scenarios.

As a first example, when using probabilistic XML to represent a set
$D$ of possible versions~\cite{ba2013uncertain} of an XML document, one may want
to determine if a version~$W$, obtained from a user or from an external source,
is one of the known possible versions represented as a probabilistic XML
document~$D$.
For instance, assume that a probabilistic XML
version control system asks a user to resolve a conflict~\cite{ba2013merging}, whose
uncertain set of possible outcomes is represented by $D$. When the user provides
a candidate merge~$W$, the system must then check if 
the document $W$ is indeed a possible way to solve the
conflict.
This may be hard to
determine, because $D$ may, in general, have many ways to generate~$W$, through a
possibly intractable number of different valuations of its uncertainty events.

As a second practical example, assume that a user is studying an uncertain
document~$D$ that provides a representation of possible versions of an XML tree,
using probabilistic XML to represent possible conflicting choices and their
probability. The user notices that choosing a certain combination of decisions
yields a certain deterministic document~$W$, and asks whether the same document
could have been obtained by making different choices. Indeed, maybe $W$ is
considered improbable under $D$ following this particular valuation, but is
likely overall because the same document can be obtained through many different
ways. What is the probability, over all valuations, of the user's chosen
outcome $W$ according to $D$?

On the face of it, $\pposs$ seems related to query evaluation: we wish to
evaluate on $D$ a query $q_W$ which is, informally, ``is the input
document exactly $W$''? However, there are three reasons why query evaluation
cannot give good complexity bounds for $\pposs$. First, because $q_W$ depends
on the possibly large $W$, we are not performing query answering for a
\emph{fixed} query, so we can only use the unfavorable \emph{combined
complexity} bounds where both the input document $D$ and the query $q_W$ are
part of the input. Second, because we want to obtain \emph{exactly} $W$, the
match of $q_W$ should never map two query variables to the same node
of $D$, so the query language must allow inequalities on node identifiers.
Third, once again because we require an exact match, we need to assert the
\emph{absence} of the nodes which are not in~$W$, so we need negation in the 
language.
To our knowledge, then, the only upper bound for $\pposs$ from query answering
is the combined complexity bound for the (expressive) \emph{monadic second-order
logic over trees} whose evaluation on deterministic (not even probabilistic) XML
trees is already PSPACE-hard~\cite{libkin2004elements}.

A second related approach is that of tree automata on probabilistic XML
documents. Indeed, we can encode the possible world~$W$ to a
deterministic tree automaton~$A_W$ and compute the probability that $A_W$
accepts the
probabilistic document $D$. The decision and computation variants of $\pposs$
under local uncertainty models 
are thus special cases of the ``relevancy'' and ``p-acceptance'' problems
of~\cite{cohen2009running}. However, their work only considers
\emph{ordered} trees, and an unordered $W$ cannot easily be translated to their
deterministic tree
automata, because of possible label ambiguity: we
cannot impose an arbitrary order on~$D$ and~$W$, as this also chooses how
nodes must be disambiguated. In fact, we will show that $\pposs$ is hard in
some settings that are tractable for ordered documents.

This paper specifically focuses on the $\pposs$ problem to study the precise complexity
of its different formulations.
Our probabilistic XML representation is the $\prxml$ model
of~\cite{kimelfeld2013probabilistic}, noting that some results are known for the
$\pposs$ problem (called the ``membership problem'') in the incomparable and
substantially different ``open-world'' incomplete XML model
of~\cite{barcelo2010xml} (whose documents have an infinite set of possible
worlds, instead of a possibly exponential but finite set as in $\prxml$).

We start by defining the required preliminaries in Section~\ref{sec:prelim} and
the different variants of $\pposs$ in Section~\ref{sec:problem},
establishing its overall NP-completeness and reviewing the results
of~\cite{cohen2009running}. We then study local uncertainty models in
Section~\ref{sec:local} and show that the absence of order impacts
tractability, with a different picture for the decision and computation variants
of $\pposs$. Last, in Section~\ref{sec:condition}, we show that $\pposs$ can be
made tractable under long-distance event correlations, by disallowing event
conjunctions and imposing an ``explicit matches'' condition which generalizes,
e.g., unique node labels.
We then conclude in Section~\ref{sec:conclusion}. \omitteddisc

\section{Preliminaries}
\label{sec:prelim}

We start by formally defining XML documents and probability distributions over
them:

\begin{definition}
  An \emph{unordered XML document} is an unordered tree whose nodes carry a
  label from a set $\Lambda$ of labels. \emph{Ordered} XML documents are defined
  in the same way but with ordered trees, that is, there is a total order over
  the children of every node.

  A \emph{probability distribution} is a function $\mathcal{P}$ mapping every
  XML document $x$ from a finite
  set $\supp(\mathcal{P})$ to a rational number
  $\mathcal{P}(x)$, its \emph{probability} according to $\mathcal{P}$,
  with the condition that $\sum_{D \in \supp(\mathcal{P})} \mathcal{P}(D) = 1$.
  For any $x \notin \supp(\mathcal{P})$ we write $\mathcal{P}(x) = 0$.
\end{definition}

As it is unwieldy to manipulate explicit probability distributions over large
sets of documents, we use the language of probabilistic
XML~\cite{kimelfeld2013probabilistic} to write extended
XML documents (with so-called \emph{probabilistic nodes}) and give them a
semantics which is a
(possibly exponentially larger) probability distribution over XML documents.
Intuitively, probabilistic XML documents are XML documents with specific
\emph{probabilistic} nodes describing possible choices in the document; their
semantics is the set of XML documents that can be obtained under those choices.

\begin{definition}
  A $\prxml$ \emph{probabilistic XML document} $D$ is an XML document over $\Lambda
  \sqcup \{\drm, \ind, \mux, \cie, \fie\}$.
  The nodes of~$D$ with
  labels from $\Lambda$ are called \emph{regular} nodes, by opposition to
  \emph{probabilistic} nodes.
  The probabilistic labels 
  respectively stand for: determininistic, independent, mutually exclusive,
  conjunction of independent events, formula of independent events.
  For any subset $\mathcal{L} \subseteq \{\drm, \ind, \mux, \cie, \fie\}$, we
  call $\prxml^{\mathcal{L}}$ the language of probabilistic XML documents
  containing only nodes with labels in $\Lambda \sqcup \mathcal{L}$.

  We require that the root of a $\prxml$ document~$D$ is a regular node,
  that every edge from a $\mux$ or
  $\ind$ node to a child node is labeled with some rational
  number\footnote{The non-standard constraint $x < 1$ means that
    $\ind$ does not subsume $\drm$ (see Thm.~\ref{thm:possmuxind}
    and~\ref{thm:possempmuxinddrm} for examples where this distinction matters).}
    $0 < x < 1$
  (the sum of the labels of the children of every $\mux$ node being $\leq 1$),
  and that every edge from a $\cie$ (resp.\ $\fie$) node to a child node is labeled
  with a conjunction (resp.\ a Boolean formula) of \emph{events} from a set $E$
  of events (and their negations), with~$D$ providing a mapping $\pi : E \rightarrow [0,
  1]$ attributing a rational probability to every event.
\end{definition}

The semantics of a $\prxml$ document $D$ is the probability distribution over
XML documents defined by the following sampling process
(see~\cite{kimelfeld2013probabilistic} for more details):

\begin{definition}
  A deterministic XML document $W$ is obtained from a $\prxml$ document $D$ as
  follows. First, choose a valuation $\nu : E \rightarrow \{\true, \false\}$ of
  the events from $E$, with probability $\prod_{e~\mathrm{s.t.} \nu(e) =
  \true} \pi(e) \times \prod_{e~\mathrm{s.t.} \nu(e) = \false} (1 -
  \pi(e))$. Evaluate $\cie$ and $\fie$ nodes by keeping only the child
  edges whose Boolean formula is true under $\nu$. Evaluate $\ind$ nodes by
  choosing to keep or delete every child edge according to the probability
  indicated on its edge label. Evaluate $\mux$ nodes by removing all of their
  children edges, except one chosen according to its probability (possibly
  keep none if the probabilities sum up to less than~$1$). Finally, evaluate
  $\drm$ nodes by replacing them by the collection of their
  children.

  All probabilistic choices are performed independently,
  so the overall probability of an outcome is the product of the
  probabilities at each step. Whenever an edge is removed, all of the
  descendant nodes and edges are removed. The probability of a document
  $W$ according to $D$, written $D(W)$, is the total probability of all
  outcomes\footnote{Note that in general there may be multiple
  outcomes that lead to the same document $W$.} leading to~$W$.
\end{definition}

\begin{figure}[t]
\begin{flushright}
  \begin{tabular}{ll}
    \toprule
    {\bf Event~~} & {\bf Probability} \\
    \midrule
    $e$ & $0.9$ \\
    \bottomrule
  \end{tabular}
\end{flushright}
\vspace{-5.2em}
  \centering
\begin{tikzpicture}[
  level 2/.style={sibling distance=1em},
  level 3/.style={sibling distance=-2em},
  level distance = 3em
]
\Tree
[.conferences
  [.{$\ind$}
    \edge node[auto=right] {0.8};
    [.conference
      [.name
        [.{BDA} ]
      ]
      [.{$\cie$}
        \edge node[auto=left] {$e$};
        [.location
          [.city
            [.Grenoble-Autrans ]
          ]
          [.country
            [.FR ]
          ]
        ]
      ]
    ]
    \edge node[auto=left] {0.7};
    [.conference
      [.name
        [.{AMW} ]
      ]
      [.{$\cie$}
        \edge node[auto=left] {$e$};
        [.location
          [.{$\mux$}
            \edge node[auto=right] {0.9};
            [.{$\drm$}
              [.city
                [.{Cartagena de Indias} ]
              ]
              [.country
                [.CO ]
              ]
            ]
            \edge node[auto=left] {0.1};
            [.{$\drm$}
              [.city
                [.{Cartagena} ]
              ]
              [.country
                [.ES ]
              ]
            ]
          ]
        ]
      ]
    ]
  ]
]
\end{tikzpicture}
\caption{Example $\prxml^{\muxinddrm,\cie}$ document; the provided table is the
mapping~$\pi$ that attributes probabilities to probabilistic events}
\label{fig:example}
\end{figure}
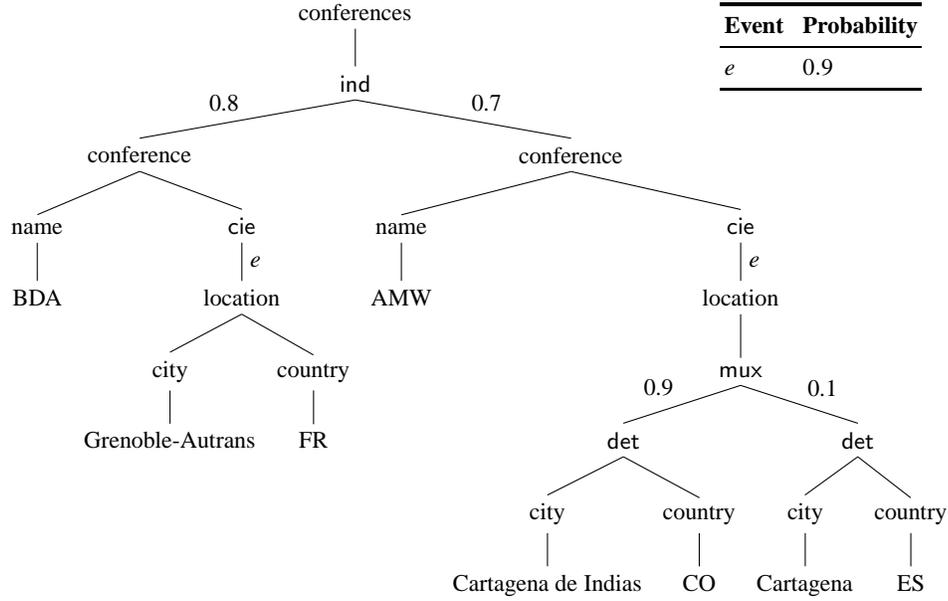

We say that $\mux$, $\ind$ and $\drm$ are \emph{local} in the sense that they
describe a probabilistic choice that \emph{takes place} at this point of the
document, independently from other choices (except for the fact that discarding
a subtree makes irrelevant all local probabilistic choices in that subtree. By
contrast, we say that $\cie$ and $\fie$ are \emph{long-distance} in the sense
that a valuation is chosen \emph{globally} for the probabilistic events and the $\cie$
and $\fie$ nodes are then evaluated according to that choice: this may induce
\emph{correlations} between arbitrary portions of the document, because the
same event can be reused multiple times at different positions in the document.

\begin{example}
  Consider the example probabilistic XML document~$D$ in
  Figure~\ref{fig:example}. Its possible worlds are obtained as follows. First,
  draw a valuation for the (only) event $e$, which may be~$\false$ (with
  probability~$0.1$) or~$\true$ (with probability~$0.9$). Then, decide whether
  to keep or discard the first ``conference'' subtree, with probability~$0.8$,
  and decide whether to keep or discard the second such subtree, with
  probability~$0.7$. Remove the $\cie$ nodes and keep or discard their children
  depending on whether the chosen valuation for~$e$ is~$\true$ or~$\false$
  respectively. Decide whether to keep the first or
  second child of the $\mux$ node, and replace the corresponding $\drm$ node by
  its children. All probabilistic choices are made independently.

  Observe how the choice on $\mux$ is irrelevant if the corresponding subtree
  was discarded by the parent $\ind$ node or by the $\cie$ node, and notice the
  use of $\drm$ nodes to switch between sets of nodes using a $\mux$ node. Note
  that the use of $\cie$ nodes introduces a \emph{correlation} in the sense that
  the first ``location'' node is present if and only if the second is also
  present.
\end{example}

Of course, the expressiveness and compactness of $\prxml$ frameworks depend on
which probabilistic nodes are allowed: we say that $\prxml^{\mathcal{C}}$ is \emph{more
general} than $\prxml^{\mathcal{D}}$ if there is a polynomial time algorithm to
rewrite any $\prxml^{\mathcal{D}}$ document to a $\prxml^{\mathcal{C}}$ document
representing the same probability distribution. 
Fig.~\ref{fig:hier}
(adapted from~\cite{kharlamov2010updating}) represents this hierarchy
on the $\prxml$ classes that we consider.

\begin{figure}[t]
\noindent
\begin{minipage}{0.515\textwidth}
  \setlength{\tabcolsep}{2pt}
\begin{tabular}{@{\hskip -0.2pt}r@{\hskip 2.5pt}c@{\hskip 2.5pt}l@{\hskip 4pt}l@{\hskip 4pt}r}
\toprule
\multicolumn{3}{c}{\bf \hspace{-3em}Problem} &
\multicolumn{2}{c}{\bf Complexity} \\
\midrule
$\pposs$ & $\top$ & $\fie$ & NP & (Prop.~\ref{prop:uposs-upossc}) \\
$\ppossc$ & $\top$ & $\fie$ & $\fpsp$ & (Prop.~\ref{prop:uposs-upossc}) \\
$\ppossc$ & $\tot$ & $\muxinddrm$ & PTIME & (Thm.~\ref{thm:possctotmuxinddrm}) \\
$\ppossc$ & $\emp$ & $\ind$ or $\mux$ & $\#$P-hard & (Thm.~\ref{thm:posscemp}) \\
$\pposs$ & $\top$ & $\ind$ or $\mux$ & PTIME & (Thm.~\ref{thm:possmuxind})\\
$\pposs$ & $\emp$ & 2 of $\muxinddrm$ & NP-hard & (Thm.~\ref{thm:possempmuxinddrm})\\
$\ppossce$ & $\top$ & $\muxinddrm$ & PTIME & (Thm.~\ref{thm:posscemuxinddrm}) \\
$\pposse$ & $\bot$ & $\cie$ & NP-hard & (Thm.~\ref{thm:possecie}) \\
$\pposs$ & $\bot$ & $\mie$ & NP-hard & (Thm.~\ref{thm:possmie}) \\
$\ppossce$ & $\top$ & $\mie$ & PTIME & (Thm.~\ref{thm:posscemie}) \\
\bottomrule
\end{tabular}

  \vspace{-1em}
  \captionof{table}{Summary of results}
  \label{tab:results}
\end{minipage}
\hfill\begin{minipage}{0.49\textwidth}
\noindent
\hspace{-.05\textwidth}
\begin{minipage}{0.43\textwidth}
  \centering
  \begin{tikzpicture}[>=latex,line join=bevel,scale=0.5]
\node (TOP) at (54bp,106bp) [draw,draw=none] {$\top$};
  \node (BOT) at (54bp,18bp) [draw,draw=none] {$\bot$};
  \node (EMP) at (82bp,62bp) [draw,draw=none] {$\emp$};
  \node (TOT) at (27bp,62bp) [draw,draw=none] {$\tot$};
  \draw [<-] (TOT) ..controls (42.735bp,36.523bp) and (42.838bp,36.363bp)  .. (BOT);
  \draw [<-] (TOP) ..controls (70.318bp,80.523bp) and (70.424bp,80.363bp)  .. (EMP);
  \draw [<-] (TOP) ..controls (38.265bp,80.523bp) and (38.162bp,80.363bp)  .. (TOT);
  \draw [<-] (EMP) ..controls (65.682bp,36.523bp) and (65.576bp,36.363bp)  .. (BOT);
\end{tikzpicture}
  \smallskip
  \centering
  \hspace{.07\textwidth}\begin{minipage}{.93\textwidth}
    \centering
  \begin{tikzpicture}[>=latex,line join=bevel,scale=0.5]
\node (POSSCE) at (86bp,62bp) [draw,draw=none] {$\ppossce$};
  \node (POSSE) at (56bp,18bp) [draw,draw=none] {$\pposse$};
  \node (POSSC) at (56bp,106bp) [draw,draw=none] {$\ppossc$};
  \node (POSS) at (27bp,62bp) [draw,draw=none] {$\pposs$};
  \draw [<-] (POSSCE) ..controls (68.516bp,36.523bp) and (68.402bp,36.363bp)  .. (POSSE);
  \draw [<-] (POSS) ..controls (43.901bp,36.523bp) and (44.011bp,36.363bp)  .. (POSSE);
  \draw [<-] (POSSC) ..controls (39.099bp,80.523bp) and (38.989bp,80.363bp)  .. (POSS);
  \draw [<-] (POSSC) ..controls (73.484bp,80.523bp) and (73.598bp,80.363bp)  .. (POSSCE);
\end{tikzpicture}
  \end{minipage}
\end{minipage}
\hspace{-.13\textwidth}
\begin{minipage}{0.47\textwidth}
  \centering
  \medskip
  \begin{tikzpicture}[>=latex,line join=bevel,scale=0.47]
\node (MUXINDDRM) at (117bp,150bp) [draw,draw=none] {$\muxinddrm$};
  \node (MUXDRM) at (190bp,106bp) [draw,draw=none] {  $\muxdrm$  };
  \node (CIE) at (171bp,194bp) [draw,draw=none] {$\cie$};
  \node (MIE) at (225bp,150bp) [draw,draw=none] {\phantom{p}$\mie$};
  \node (FIE) at (171bp,238bp) [draw,draw=none] {$\fie$};
  \node (INDDRM) at (36bp,106bp) [draw,draw=none] {  $\inddrm$  };
  \node (MUX) at (178bp,62bp) [draw,draw=none] {$\mux$};
  \node (EMP) at (144bp,18bp) [draw,draw=none] {$\emptyset$};
  \node (MUXIND) at (111bp,106bp) [draw,draw=none] {  $\muxind$  };
  \node (IND) at (111bp,62bp) [draw,draw=none] {$\ind$};
  \draw [<-] (MUXINDDRM) ..controls (157.04bp,125.96bp) and (158.58bp,125.08bp)  .. (MUXDRM);
  \draw [<-] (MUXINDDRM) ..controls (72.999bp,126.18bp) and (70.945bp,125.12bp)  .. (INDDRM);
  \draw [<-] (MUXINDDRM) ..controls (113.5bp,124.52bp) and (113.48bp,124.36bp)  .. (MUXIND);
  \draw [<-] (MIE) ..controls (233.57bp,110.57bp) and (233.08bp,97.929bp)  .. (227bp,88bp) .. controls (222.06bp,79.937bp) and (213.62bp,74.372bp)  .. (MUX);
  \draw [<-] (IND) ..controls (130.23bp,36.523bp) and (130.36bp,36.363bp)  .. (EMP);
  \draw [<-] (CIE) ..controls (140.35bp,169.16bp) and (139.73bp,168.68bp)  .. (MUXINDDRM);
  \draw [<-] (MUXDRM) ..controls (183.01bp,80.523bp) and (182.96bp,80.363bp)  .. (MUX);
  \draw [<-] (INDDRM) ..controls (78.181bp,81.379bp) and (80.957bp,79.824bp)  .. (IND);
  \draw [<-] (CIE) ..controls (201.65bp,169.16bp) and (202.27bp,168.68bp)  .. (MIE);
  \draw [<-] (FIE) ..controls (171bp,212.52bp) and (171bp,212.36bp)  .. (CIE);
  \draw [<-] (MUXIND) ..controls (148.35bp,81.585bp) and (149.68bp,80.754bp)  .. (MUX);
  \draw [<-] (MUXIND) ..controls (111bp,80.523bp) and (111bp,80.363bp)  .. (IND);
  \draw [<-] (MUX) ..controls (158.19bp,36.523bp) and (158.06bp,36.363bp)  .. (EMP);
\end{tikzpicture}
\end{minipage}
\vspace{0.15em}
\captionof{figure}{Variants and PTIME reductions}
  \label{fig:hier}
\end{minipage}
\vspace{-1em}
\end{figure}

\section{Problem and general bounds}
\label{sec:problem}

We now define the $\pposs$ problem formally, in its decision and computation
variants.

\begin{definition}
  Given a class $\prxml^{\mathcal{C}}$, the \emph{possibility problem} for
  \emph{unordered} documents $\poss{\emp}{\mathcal{C}}$ is to determine, given
  as input an unordered $\prxml^{\mathcal{C}}$ document $D$ and an unordered XML
  document $W$, whether $W$ is a possible world of $D$, namely, $D(W) > 0$.

  The \emph{possibility problem} for \emph{ordered} documents
  $\poss{\tot}{\mathcal{C}}$ is the same problem except that both $D$ and $W$ are
  ordered.
  For $o \in \{\emp, \tot\}$,
  the $\possc{o}{\mathcal{C}}$ problem
  is to compute the probability~$D(W)$ of~$W$ according to~$D$. Observe that
  $\possc{o}{\mathcal{C}}$ is a \emph{computation problem} rather than a
  decision problem, namely, it computes an output value based on the provided
  input (here, a probability value) instead of merely deciding whether to accept
  or reject.
\end{definition}

For brevity, we write $\poss{\bot}{\mathcal{C}}$ and $\poss{\top}{\mathcal{C}}$
when describing lower or upper complexity bounds that apply to both
$\poss{\tot}{\mathcal{C}}$ and $\poss{\emp}{\mathcal{C}}$.

We start by giving straightforward bounds on the most general problem variants:

\begin{proposition}
  \label{prop:uposs-upossc}
  $\poss{\top}{\fie}$ is in NP and $\possc{\top}{\fie}$ is in $\fpsp$.
\end{proposition}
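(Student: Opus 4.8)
The plan is to exploit that, in the $\fie$ model, fixing a valuation of the events makes the whole document deterministic. The only probabilistic nodes in a $\prxml^{\fie}$ document are $\fie$ nodes, whose child edges carry Boolean formulas over the event set $E$; once a valuation $\nu : E \to \{\true, \false\}$ is fixed, each $\fie$ node retains exactly the children whose formula holds under $\nu$ and is then spliced out, with subtrees under discarded edges vanishing. Thus $D$ and $\nu$ determine a single outcome document, which I will write $D[\nu]$ and which is computable from $D$ and $\nu$ in polynomial time. We then have $D(W) = \sum_{\nu\,:\,D[\nu] = W} \Pr[\nu]$ where $\Pr[\nu] = \prod_{\nu(e) = \true} \pi(e) \prod_{\nu(e) = \false}(1 - \pi(e))$, and where ``$D[\nu] = W$'' denotes equality of ordered trees in the ordered case and labeled unordered tree isomorphism in the unordered case; both tests run in polynomial time (the latter by a canonical-form algorithm for unordered labeled trees).

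For the NP bound on $\poss{\top}{\fie}$, I would take the valuation $\nu$ as the certificate: it has polynomial size since $|E|$ is bounded by the input. The verifier computes $D[\nu]$, tests $D[\nu] = W$, and additionally checks that $\nu$ has positive probability, i.e.\ that $\nu(e) = \true$ only when $\pi(e) > 0$ and $\nu(e) = \false$ only when $\pi(e) < 1$. Since $D(W) > 0$ holds exactly when some positive-probability valuation yields $W$, this is precisely an NP acceptance condition.

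For the $\fpsp$ bound on $\possc{\top}{\fie}$, the task is to turn the weighted sum $D(W)$ into a quantity a $\#\mathrm{P}$ oracle can count. Writing each input probability as $\pi(e) = a_e / b_e$ with integers $0 \le a_e \le b_e$, I would clear denominators: with $B = \prod_e b_e$, we get $B \cdot D(W) = \sum_{\nu\,:\,D[\nu] = W} \prod_{\nu(e) = \true} a_e \prod_{\nu(e) = \false}(b_e - a_e)$, a sum of products of nonnegative integers. This number is realised as the count of accepting paths of a single nondeterministic machine that guesses $\nu$, verifies $D[\nu] = W$ in polynomial time, and then for each event $e$ guesses an auxiliary index, accepting only when it lies in $\{1, \dots, a_e\}$ if $\nu(e) = \true$, or in $\{1, \dots, b_e - a_e\}$ if $\nu(e) = \false$; this contributes exactly the intended integer multiplicity. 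One oracle call returns $B \cdot D(W)$, and a final polynomial-time division by the known integer $B$ yields $D(W)$, placing the problem in $\fpsp$.

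The routine parts are the deterministic evaluation $D[\nu]$ and the equality/isomorphism tests. The step needing most care is the $\#\mathrm{P}$ reduction: one must verify that the auxiliary guesses genuinely multiply the path count by the intended weight, that $B$ and all $a_e, b_e$ have polynomially many bits so the machine is polynomial, and that the final division by $B$ is exact. A minor but worthwhile remark is that the unordered case only ever appeals to labeled unordered tree isomorphism, which is in $\mathrm{P}$, so the ordered/unordered distinction affects neither bound.
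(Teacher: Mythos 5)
Your proposal is correct and follows essentially the same route as the paper: guess a valuation and verify $D[\nu]=W$ in polynomial time (ordered traversal, or a polynomial unordered-isomorphism test) for the NP bound, then clear denominators and compute the resulting integer-weighted sum over valuations with a single \#P oracle call followed by exact division for the $\fpsp$ bound. The only differences are implementation details---you build the counting machine directly via auxiliary index guesses where the paper invokes Lemma~5.2 of its reference \cite{abiteboul2011capturing}, you use the product of denominators $B=\prod_e b_e$ instead of an LCM common denominator $d^k$, and you make explicit the check that the guessed valuation has positive probability (needed since $\pi$ may assign probabilities $0$ or $1$), a point the paper leaves implicit.
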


\begin{proof}
  We first show the NP-membership of $\poss{\top}{\fie}$.

  Let us first consider $\poss{\tot}{\fie}$.
  Consider the input $(D, W)$. Guess a valuation of the probabilistic events of
  $D$. The size of the guess is linear in $|D|$. Now, check that the guess is
  suitable, namely, that the deterministic document $D'$ obtained from $D$ under
  this valuation is exactly $W$: as both $D'$ and $W$ are totally ordered trees,
  this can be checked straightforwardly in linear time from a simultaneous
  traversal of $D'$ and $W$. Hence, $\poss{\tot}{\fie}$ is in NP.

  Let us now consider  $\poss{\emp}{\fie}$. The proof idea is the same, except
  that checking that $D'$ and $W$ are equal is not as obvious, because those
  trees are not ordered; however, this check can be performed in PTIME by a
  dynamic bottom-up algorithm similar to that of the proof of
  Thm.~\ref{thm:possmuxind}, so that the result still holds.

  \medskip

  We now show the $\fpsp$-membership of $\possc{\top}{\fie}$.
  
  We first preprocess all the event probabilities in the probabilistic document
  $D$ so that all numbers are represented with the same denominator. This can be
  done in polynomial time by a least common multiple computation and product
  operations. We then read off the common denominator, $d$. We can compute $d^k$
  in polynomial time, where $k$ is the number of events.

  We then use Lemma~5.2 of~\cite{abiteboul2011capturing} to argue that it is
  possible, in \#P, to compute the unnormalized probability of $W$, that is,
  the probability of $W$ in $D$ without dividing by $d^k$. To do so, the
  generating PTIME Turing machine $T$ enumerates all possible valuations, and
  the function $g$ returns $0$ if the outcome does not yield the desired
  document $W$ (which can be decided in PTIME by the above proof for the
  decision problem), and otherwise returns the unnormalized probability of the
  outcome, that is, the product of the numerators of the involved probabilities.
  (The denominator, which would be $d^k$, is ignored for now.) Hence, by
  application of this lemma, $\possc{\top}{\fie}$ is in $\fpsp$, because all
  that remains is to divide the result of this \#P computation by $d^k$ to
  obtain the final probability.
\end{proof}

\begin{proposition}
  \label{prop:posscie}
  $\poss{\bot}{\cie}$ is NP-complete, even when $D$ has height $3$.
\end{proposition}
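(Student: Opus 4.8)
The plan is to prove the two directions separately, and to make the hardness argument work uniformly for the ordered and unordered variants so as to obtain the statement for~$\poss{\bot}{\cie}$. Membership is immediate: every conjunction of events is a Boolean formula, so $\cie \subseteq \fie$ and the possibility problem on $\cie$ inputs is just $\poss{\top}{\fie}$ restricted to a subclass of inputs. Hence, by Proposition~\ref{prop:uposs-upossc}, both $\poss{\tot}{\cie}$ and $\poss{\emp}{\cie}$ are in NP. The whole content of the theorem is therefore the NP-hardness, which I would establish by reduction from CNF-SAT, while being careful that a single construction handles both orderings and stays of height~$3$.

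For the reduction, given a CNF formula $\phi = \bigwedge_{j=1}^m C_j$ over variables $x_1,\dots,x_n$, I introduce one event $e_i$ per variable $x_i$, each with probability, say, $1/2$, so that $0 < \pi(e_i) < 1$ and \emph{every} valuation is realized with positive probability. The key idea, which is what makes the construction compatible with bag semantics, is to encode each clause by its \emph{violation} rather than by a disjunction: a naive gadget attaching one child per literal would, in the unordered case, create an uncontrolled number of identical sibling nodes whenever a clause has several true literals, and this cannot be matched against a fixed~$W$. Instead, the violation of $C_j = \ell_{j1} \vee \dots \vee \ell_{jk_j}$ is the conjunction $\bigwedge_t \overline{\ell_{jt}}$ of negated literals (each $\overline{\ell_{jt}}$ being $e_i$ or $\neg e_i$), which is exactly the sort of label a $\cie$ edge admits. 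I then build~$D$ with a regular root~$r$ labeled~$\top$, a single $\cie$ child~$g$ of~$r$, and, for every clause~$C_j$, a regular leaf~$b_j$ child of~$g$ whose incoming edge carries the violation conjunction of~$C_j$; this yields a document of height~$3$. The target document~$W$ is simply the root~$r$ with no children.

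Correctness follows directly from the sampling semantics. Under a valuation~$\nu$, the distributional node~$g$ is removed and a leaf~$b_j$ survives (reattached to~$r$) precisely when its edge formula holds, i.e.\ precisely when~$\nu$ violates clause~$C_j$. Thus the sampled document equals~$W$ (the bare root) if and only if~$\nu$ violates no clause, that is, if and only if~$\nu$ satisfies~$\phi$. Since every event has probability strictly between~$0$ and~$1$, every valuation occurs with positive probability, so $D(W) > 0$ if and only if~$\phi$ is satisfiable. Because~$W$ is empty below its root, the required equality of sampled tree and~$W$ is checked trivially and is insensitive to whether we regard~$D$ and~$W$ as ordered or unordered; the very same reduction therefore witnesses hardness of both $\poss{\tot}{\cie}$ and $\poss{\emp}{\cie}$, hence of~$\poss{\bot}{\cie}$.

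The step I expect to require the most care is exactly the modeling choice highlighted above: reconciling disjunctive clauses with the fact that $\cie$ edges carry only conjunctions, and doing so robustly under unordered bag semantics where duplicate labels are counted. Encoding \emph{satisfaction} directly would force a disjunction and risk spurious duplicate children, whereas the ``avoid the violation'' encoding sidesteps both problems and, as a bonus, keeps~$W$ trivial enough that the ordered and unordered analyses collapse into one. Once this idea is in place, the remaining verifications, namely that the construction has polynomial size, that it reduces correctly, and that the height is~$3$, are routine.
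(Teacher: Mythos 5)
Your proposal is correct and takes essentially the same route as the paper's own proof: NP-membership via Proposition~\ref{prop:uposs-upossc}, and hardness by reduction from SAT using a single $\cie$ node whose children correspond to the clauses, with each edge labeled by the conjunction of the clause's \emph{negated} literals (the clause's violation), and with $W$ being the bare root so that the ordered and unordered variants coincide. The only cosmetic difference is that the paper labels the clause children $\bot$ and phrases the correctness check via De~Morgan's law, but the construction and argument are identical.
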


\begin{proof}
  From Prop.~\ref{prop:uposs-upossc} it suffices to show hardness. We show a reduction
  from the NP-hard Boolean satisfiability problem to justify that $\poss{\bot}{\cie}$ 
  is NP-hard.

  Consider a formula $F$ formed of a conjunction of disjunctive clauses
  $(C_i)_{1 \leq i \leq n}$, with clause~$C_i$ containing the literals
  $(l^i_j)_{1 \leq j \leq n_i}$, each literal being a positive or negative
  occurrence of some variable from a finite set of variables $V = \{x_1, \ldots,
  x_m\}$.

  Consider a set of $m$ Boolean events $E$ with a mapping $\phi$ associating
  $x_i \in V$ to $e_i \in E$ (and $\neg x_i$ to $\neg e_i$) for all $1 \leq i \leq m$.
  Consider $W$ the document with only one root labeled~$\top$, and the
  $\prxml^{\cie}$ document $D$, with events $E$ (and probability $1/2$ for each
  outcome), with one root labeled $\top$
  and one $\cie$ child with $n$ children labeled $\bot$, the edge of the $i$-th
  child being labeled with
  $C_i' = \neg \phi(l^i_1) \wedge \cdots \wedge \neg \phi(l^i_{n_i})$. Given the
  shape of $W$, clearly the algorithm's choice to consider $D$ and $W$ as either
  ordered or unordered trees is irrelevant, so that this works as a reduction
  either to $\poss{\emp}{\cie}$  or to $\poss{\tot}{\cie}$.

  Now, $W$ is a possible world of $D$ if and only if there is a valuation of
  the events of $E$ such that $\bigwedge_i \neg C_i'$ holds, or, equivalently by
  De Morgan's law, such that $\bigwedge_i \bigvee_j \phi(l^i_j)$ holds, hence
  $(D, W)$ is a positive instance of $\poss{\bot}{\cie}$ if and only if $F$ is
  satisfiable. Hence, $\poss{\emp}{\cie}$ is NP-hard.
\end{proof}

Local models on ordered documents are known to be tractable using tree
automata:

\begin{theorem}[\cite{cohen2009running}]
  \label{thm:possctotmuxinddrm}
  ~$\possc{\tot}{\muxinddrm}$ can be solved in polynomial time.
\end{theorem}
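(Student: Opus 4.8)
The plan is to reduce the computation of $D(W)$ to the evaluation of a weighted automaton on a linear encoding of $W$, exploiting the fact that both documents are ordered so that their tree structure can be serialized without ambiguity. Since this is, in effect, the p-acceptance problem for a deterministic tree automaton recognizing the singleton $\{W\}$, one route is to invoke the tree-automaton machinery of~\cite{cohen2009running} directly; I would instead give a self-contained argument. First I fix a deterministic serialization of ordered trees: traverse depth-first and emit, for a node with label $a$, an opening symbol $a^l$, then recursively the encodings of its children in order, then a closing symbol $a^r$ (a leaf thus encoding to $a^l a^r$). Write $e_W$ for the resulting word; it has length linear in $|W|$, and two ordered trees are equal iff their encodings coincide.

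Next I build, by structural induction on $D$, a weighted nondeterministic automaton with $\epsilon$-transitions $A_D$ over this alphabet, with a single initial state $q_0$ and single final state $q_f$, so that runs of $A_D$ correspond to sampling outcomes of $D$ and the word emitted along a run is exactly the encoding of the sampled possible world. A regular node with label $a$ becomes a gadget emitting $a^l$, then the gadgets of its children chained in order by weight-$1$ $\epsilon$-transitions, then $a^r$. A $\drm$ node is encoded identically except that the two boundary edges emit $\epsilon$ rather than $a^l, a^r$, reflecting that its children are spliced into the parent's child sequence. An $\ind$ node is a regular gadget in which the edge into each child gadget carries that child's probability $p$, augmented with a bypassing $\epsilon$-edge of weight $1-p$ corresponding to dropping the child. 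A $\mux$ node is a branching gadget selecting at most one child according to its edge probabilities, routing the leftover mass directly to $q_f$ when they sum to $<1$. In every case the outgoing weights of each state sum to $1$, and by construction the total weight of the runs emitting a word $u$ equals the probability that the sampling process of $D$ produces the tree encoded by $u$; in particular the weight of the runs emitting $e_W$ is exactly $D(W)$.

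The decisive structural feature is that $A_D$ is acyclic, since its states follow the top-down, left-to-right traversal of $D$ and no gadget loops back. I therefore compute $D(W)$ by dynamic programming over pairs $(q,s)$, where $q$ ranges over states and $s$ over the (linearly many) suffixes of $e_W$, letting $p(q,s)$ be the total weight of runs from $q$ to $q_f$ emitting $s$. The base case sets $p(q_f, \epsilon) = 1$ and $p(q_f, s) = 0$ for $s \neq \epsilon$. For a non-final state $q$, since its outgoing weights sum to $1$, I sum over outgoing edges: an $\epsilon$-edge to $q'$ contributes its weight times $p(q', s)$, while a labeled edge with letter $x$ to $q'$ contributes its weight times $p(q', s')$ when $s$ begins with $x$ and $s'$ is $s$ with its first letter removed, and $0$ otherwise. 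The answer is $p(q_0, e_W)$. There are polynomially many pairs and each update is cheap, so the algorithm runs in polynomial time.

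The main obstacle I anticipate is establishing the exact run/outcome correspondence that underlies the claim $D(W) = p(q_0,e_W)$: one must verify that the gadget construction produces, with the correct probability, precisely the encodings of the possible worlds. The delicate points are the $\drm$ nodes, which must emit no boundary letters so that their subtrees merge seamlessly into the sibling sequence of the parent, and the normalization keeping every state's outgoing weight equal to $1$, so that the dynamic program computes genuine probabilities. A secondary point is well-foundedness of the recursion: labeled edges strictly shorten the suffix, but $\epsilon$-edges preserve it, so within a fixed suffix length the states must be processed in reverse topological order—which is exactly what acyclicity of $A_D$ affords.
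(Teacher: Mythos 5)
Your proposal is correct and coincides, essentially step for step, with the paper's own stand-alone proof in Appendix~\ref{apx:possctotmuxinddrmproof}: the same SAX-style word encoding $e_W$ with opening/closing symbols $a^l, a^r$, the same gadgets turning $D$ into an acyclic weighted automaton with $\epsilon$-transitions (including the $\epsilon$-boundary treatment of $\drm$ nodes and the bypass edges for $\ind$), and the same dynamic program $p(q,s)$ over states and suffixes of $e_W$. The only difference is presentational: the paper's main-body proof instead invokes Theorem~2 of~\cite{cohen2009running} via a deterministic tree automaton for $W$, a route you explicitly mention and set aside in favor of the self-contained argument.
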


\begin{proof}
  We prove the theorem using the results of~\cite{cohen2009running}. An
  alternative, stand-alone proof is given in
  Appendix~\ref{apx:possctotmuxinddrmproof}.

  The input $\prxml^{\muxinddrm}$ document $D$ can be rewritten to an equivalent
  $\prxml^{\xpn}$ document in polynomial time~\cite{abiteboul2009expressiveness}, which is
  a pTT document as defined by~\cite{cohen2009running} (note that we make no use
  of the possibility of having uncertainty about order).
  
  Furthermore, we can encode the deterministic document $W$ to a deterministic
  tree automaton $A_W$ with deterministic finite-state automata describing the
  regular languages of the transition function. Informally, the various states
  of the automaton will correspond to the various subtrees of $W$, except that
  subtrees occurring multiple times need to be identified. Formally, we define
  an equivalence relation~$\sim$ on the nodes of~$W$ with $v \sim w$ if
  the subtrees rooted at $v$ and $w$ are isomorphic (i.e., they are the same
  tree, taking order into account. Let $C_W$ be the set of classes of this relation, and $\phi$ be a
  mapping from the nodes of~$W$ to their class in $C_W$. We can use a dynamic
  bottom-up algorithm similar to that of the proof of Thm.~\ref{thm:possmuxind}
  to compute the $\sim$ relation in polynomial time, as well as $C_W$ and the mapping
  $\phi$. Now, the alphabet of the automaton~$A_W$ is the set of node labels~$\Lambda$, its
  set of states is~$C_W$, its accepting state is~$\phi(r)$ where $r$ is the root
  of~$W$, and its transition function maps $(c, l) \in C_W \times \Lambda$ to
  the empty language (if $l$ is not the label of the nodes in $c$, noting that their
  labels must coincide) or (otherwise) to the language consisting of the single
  word $c_1 \cdots c_n$ where $n$ is the number of children of all nodes~$v$
  of~$W$ in the class $c$ and, for all~$i$, $c_i$ is the class of the $i$-th
  child (note that $n$ and the $c_i$ do not depend on the choice of
  representative~$v$). Computing $A_W$, with the languages of the transition
  function being represented by a deterministic finite-state automaton, can be
  done in polynomial time, and clearly by induction $A_W$ accepts a tree $T$ if
  and only if $T$ is isomorphic to~$W$.
  
  The problem $\possc{\tot}{\muxinddrm}$ then amounts to computing the total
probability of the possible worlds of~$D$ that are accepted by $A_W$, which can
be computed in polynomial time by Theorem~2 of~\cite{cohen2009running}.
\end{proof}

\section{Local models}
\label{sec:local}

We now complete the picture for the local model $\prxml^{\muxinddrm}$ on
unordered documents. The results of~\cite{cohen2009running} cannot be applied to
this setting, as the ambiguity of node labels imply that we cannot impose an
arbitrary order on document nodes; indeed, a reduction from perfect matching
counting on bipartite graphs shows that the computation variant is hard even on
the most inexpressive classes:

\begin{theorem}
  \label{thm:posscemp}
  $\possc{\emp}{\ind}$ and $\possc{\emp}{\mux}$ are \#P-hard, even when $D$
  has height $4$.
\end{theorem}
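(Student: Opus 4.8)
The plan is to reduce from the problem of counting perfect matchings in a bipartite graph, which is \#P-complete by Valiant's classical theorem (it is equivalent to evaluating the permanent of a $0/1$ matrix). Fix a bipartite graph with left vertices $u_1,\dots,u_n$, right vertices $v_1,\dots,v_n$, and edge set $E$. I would build an instance $(D,W)$ such that $D(W)$ equals the number of perfect matchings up to an easily computable factor, and do this separately for $\possc{\emp}{\ind}$ and $\possc{\emp}{\mux}$.

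For the $\ind$ case, I would take $D$ to have root labeled $\top$ with $n$ regular children $P_1,\dots,P_n$, all carrying the \emph{same} anonymous label $p$ (so that the left vertices become indistinguishable), where each $P_i$ has a single $\ind$ child whose outgoing edges, each of probability $1/2$, lead to leaves labeled $v_j$ (\emph{distinct} labels) for every $j$ with $(u_i,v_j)\in E$. The target $W$ is the root $\top$ with $n$ subtrees, the $j$-th being a $p$-node with exactly one child, a leaf labeled $v_j$; thus each right vertex occurs exactly once. The crucial correctness observation is that, under unordered bag semantics, an outcome is isomorphic to $W$ if and only if every $P_i$ retains \emph{exactly one} leaf and the retained leaves realise the labels $v_1,\dots,v_n$ each exactly once. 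The first condition is forced by the shape of $W$ (every $p$-node has exactly one child) and says each left vertex is matched exactly once; the second is forced by the distinct right labels and says each right vertex is covered exactly once. Together these describe precisely the perfect matchings, and an outcome matches $W$ only when the kept-edge set is \emph{exactly} a matching (keeping an extra edge over-fills some $P_i$, dropping a matching edge empties some $P_i$). Each such valuation has probability $(1/2)^{|E|}$, so $D(W)$ equals the number of perfect matchings times $(1/2)^{|E|}$, and we recover the count as $D(W)\cdot 2^{|E|}$; hence $\possc{\emp}{\ind}$ is \#P-hard.

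For $\mux$ I would keep the same skeleton but replace the $\ind$ child of each $P_i$ by a $\mux$ node over the same leaves. Now the ``retain exactly one leaf'' condition holds automatically, since a $\mux$ keeps at most one child. To respect the constraint that edge probabilities are strictly $<1$, I would use equal weights (e.g.\ $1/(n+1)$) that sum to $<1$: the ``choose none'' outcome is then allowed but produces a childless $p$-node and is excluded from matching $W$, so the contributing valuations still biject with perfect matchings and all carry the same probability $\prod_i\tfrac{1}{n+1}$, again recovering the count.

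The delicate point, and the main obstacle, is enforcing the full \emph{bijection} constraint, i.e.\ ``exactly once on each side''. For $\mux$ one side comes for free because the node selects a single child; for $\ind$ neither side is enforced locally, and a naive one-level encoding would merely count $\prod_j \deg(v_j)$ functions rather than the permanent. The resolution is to push the ``each left vertex matched exactly once'' constraint into the shape of $W$ via the anonymizing label $p$, exploiting unordered bag semantics so that the $P_i$ are interchangeable yet each is required to contribute a single distinctly-labeled child. Finally, I would note that the construction has constant depth (the root, the $p$-nodes, the probabilistic nodes, and the leaves), matching the stated height bound of $4$; if one prefers a uniform presentation or a more restricted label alphabet, the distinct right labels can be realised by a small two-node gadget, which accounts for the extra level.
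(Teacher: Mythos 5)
Your proposal is correct and is essentially the paper's reduction: both count perfect matchings of a bipartite graph via a height-$4$ document whose root has $n$ identically-labeled children (your $p$, the paper's $\bot$), with distinct leaf labels for the right vertices, so that an outcome is isomorphic to $W$ exactly when the kept edges form a perfect matching, giving $D(W) = \#\mathrm{PM} \cdot 2^{-|E|}$. The one structural difference is gadget granularity: the paper attaches one probabilistic node \emph{per edge}, each with a single child of probability $1/2$, which makes the $\mux$ case a one-line substitution ($\ind$ and $\mux$ behave identically on a single child); you instead group all of a left vertex's choices under one probabilistic node \emph{per vertex}, which is equally valid for $\ind$ but forces you to re-argue the $\mux$ case with sub-stochastic weights $1/(n+1)$ summing to $<1$ and a different normalization factor $(n+1)^{-n}$ --- that argument is also sound (degrees are at most $n$, so the weights respect the $<1$ constraint, and the no-child outcome is excluded by the shape of $W$), just slightly less economical than the paper's trick.
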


\begin{proof}
  We first focus on the case of $\prxml^{\ind}$. 
  We show a reduction from the problem of counting the number of perfect
  matchings
  of a bipartite graph\footnote{Recall that a \emph{perfect matching} in a bipartite graph is a subset of its edges
    such that each vertex of the graph (in either part) is adjacent to
  exactly one edge of the subset.}, which is 
  \#P-hard~\cite{valiant1979complexity}. Let $G = (V, W, E)$ be a bipartite
  graph. We assume without loss of generality that $|V| = |W|$ (as $G$ certainly
  cannot have perfect matchings otherwise), and let $n = |V| = |W|$.

  Now, consider $W$ with root labeled $\top$, $n$ children labeled $\bot$, each of
  them with one child with labels respectively $l_1, \ldots, l_n$.
  Consider the uncertain document $D$ with root labeled $\top$, $n$ children labeled
  $\bot$, the $i$-th of them (for all $i$) having, for every $j$ such that there
  is an edge in~$E$ from node $i$ of~$V$ to node $j$ of $W$, an $\ind$ child with one
  child labeled $l_j$ with edge label $1/2$.

  We claim that $D(W)$ is exactly the number of perfect matchings of the
  bipartite graph~$G$, divided by $F = 2^{|E|}$.

  To see why this is true, notice that each edge of $E$ corresponds to an $\ind$
  node of~$D$. Hence, for any subset $M \subseteq E$, let us consider the
  valuation $\nu_M$ where the $\ind$ nodes for edges in $M$ keep their child
  node, and the $\ind$ nodes for edges not in $M$ discard their child node. This
  mapping between subsets of $E$ and valuations is clearly one-to-one, and all
  those valuations have probability $1/F$ (because each of the $|E|$ events has
  probability $1/2$ and all of them are independent).

  It only remains to see that the valuation $\nu_M$ yields $W$ if and only if
  $M$ is a perfect matching, but this is easy to see: if $M$ is a perfect
  matching, each node labeled $\bot$ keeps exactly one child, and one node
  labeled $l_i$ is kept for each node, so that $\nu_M$ yields $W$; conversely,
  if $M$ is not a perfect matching, either there is a node labeled $\bot$ with
  zero or~$>1$ children, or there is some $l_i$ kept zero or~$>1$ times, so that
  $\nu_M$ does not yield $W$.
  Hence, this completes the reduction, and shows that $\possc{\emp}{\ind}$ is
  \#P-hard.

  For the case of $\prxml^{\mux}$, observe that the previous proof can be
  immediately adapted by replacing $\ind$ nodes with $\mux$ nodes, as every
  $\ind$ node has exactly one child.
\end{proof}

By contrast, the decision variant is tractable for $\prxml^{\ind}$ and
$\prxml^{\mux}$, using a dynamic algorithm. However, allowing both $\ind$ and
$\mux$, or allowing $\drm$ nodes, leads to intractability (by reductions from
set cover and Boolean satisfiability).

\begin{theorem}
  \label{thm:possmuxind}
  $\poss{\top}{\ind}$ and $\poss{\top}{\mux}$ can be decided in PTIME.
\end{theorem}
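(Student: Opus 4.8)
The plan is to solve the decision problem directly, bypassing any probability computation. Since all edge probabilities lie strictly in $(0,1)$ and the models $\prxml^{\ind}$ and $\prxml^{\mux}$ have no long-distance correlations, every resolution of the local choices has positive probability; hence $W$ is a possible world of $D$ if and only if \emph{some} resolution of the $\ind$ (resp.\ $\mux$) choices yields exactly $W$, and it suffices to decide realizability. I would decide this by a bottom-up dynamic program over pairs $(v,w)$, where $v$ ranges over the regular nodes of $D$ and $w$ over the nodes of $W$, computing a Boolean $\mathrm{poss}(v,w)$ that is $\true$ iff the subtree of $D$ rooted at $v$ admits the subtree of $W$ rooted at $w$ among its possible worlds. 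There are $O(|D|\cdot|W|)$ such pairs, and the final answer is $\mathrm{poss}(r_D,r_W)$ for the two roots (immediately $\false$ if their labels differ).

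The recursion at a pair $(v,w)$ first checks that $v$ and $w$ carry the same label, and then asks whether the \emph{effective children} of $v$ — the regular nodes that survive and attach to $v$ once the distributional nodes below it are resolved and contracted — can be made to coincide exactly with the children of $w$, matched subtree-by-subtree through the already-computed values $\mathrm{poss}(\cdot,\cdot)$. The heart of the argument is a combinatorial description of which sets of effective children are simultaneously realizable. For $\ind$: a regular child of $v$ is always present (forced), while each regular node reachable from $v$ through a path of $\ind$ nodes can be retained or dropped independently, and in fact \emph{every} subset of them is achievable — one keeps a candidate by keeping all edges on its path and drops it by cutting its unique incoming edge, which touches no other candidate, so the keep/drop decisions never conflict. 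For $\mux$: each $\mux$ child of $v$ contributes \emph{at most one} surviving regular descendant, and if it contributes none it must be \emph{emptiable} (able to retain nothing). In both cases I reduce the existence of a realizing choice to a bipartite matching between the children of $w$ and the candidate contributors at $v$ (edges given by $\mathrm{poss}$), subject to saturation constraints: every forced contributor — regular children of $v$, and non-emptiable $\mux$ children — must be matched, each $\mux$ child is matched at most once, and every child of $w$ is matched. Such a constrained matching is computable in polynomial time, e.g.\ via max-flow.

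Two auxiliary points complete the construction. First, for the $\mux$ case the recursion must also maintain, for every $\mux$ node $m$, the set of nodes $w$ of $W$ whose rooted subtree $m$ can output, together with a Boolean $\mathrm{muxempty}(m)$ recording whether $m$ can output nothing (true iff its edge probabilities sum to ${<}1$ or one of its children can output nothing); both are produced by the same bottom-up pass. Second, for \emph{ordered} documents the retained regular descendants appear in their induced tree order, so the unordered bipartite matching is replaced by an order-preserving subsequence-alignment dynamic program over the children sequences, which is again polynomial. The unordered bounds $\poss{\emp}{\ind}$ and $\poss{\emp}{\mux}$ together with the ordered bounds $\poss{\tot}{\ind}$ and $\poss{\tot}{\mux}$ then give $\poss{\top}{\ind}$ and $\poss{\top}{\mux}$.

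I expect the main obstacle to be establishing the realizability characterizations and proving that they capture exactly the possible worlds: the ``every subset is achievable'' claim for $\ind$ rests on the observation that each candidate has a private incoming edge, so drop decisions are non-conflicting, and the $\mux$ analysis must correctly \emph{force} non-emptiable branches to be matched rather than merely bounding each branch by one. Casting these as matchings with the right saturation constraints — and, in the ordered case, verifying that contraction preserves order so that the alignment program is sound — is where the care is needed; the overall polynomial bound then follows routinely from the $O(|D|\cdot|W|)$ pairs, each resolved by a polynomial matching.
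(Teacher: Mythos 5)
Your proposal is correct, and for the unordered case it is essentially the paper's own argument: the same bottom-up dynamic program over pairs of a $D$-node and a $W$-node, the same emptiability predicate for $\mux$ chains (your $\mathrm{muxempty}$ is exactly the paper's $e(\cdot)$, defined by ``probabilities sum to $<1$ or some child is emptiable''), and the same reduction of the per-node children assignment to bipartite matching --- the paper pads the $W$-side with dummy nodes adjacent to the optional contributors and asks for a perfect matching (via Hopcroft--Karp), which is equivalent to your degree-constrained matching with saturation of forced contributors solved by max-flow. Your key independence observation (each optional contributor has a private incoming probabilistic edge, so keep/drop decisions never conflict even along shared $\ind$ prefixes) is the same justification the paper gives for why any subset of optional nodes is simultaneously realizable. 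Where you genuinely diverge is the \emph{ordered} case: the paper simply invokes Theorem~\ref{thm:possctotmuxinddrm}, i.e., the tree-automata machinery of~\cite{cohen2009running} that computes the exact probability $D(W)$ in PTIME for ordered $\prxml^{\muxinddrm}$, and reads off possibility as $D(W)>0$; you instead replace the unordered matching at each node by an order-preserving subsequence-alignment DP over the contributor sequence and the children of $w$, which is sound because contracting probabilistic nodes and deleting subtrees preserves the induced sibling order. Your route buys a self-contained, elementary proof of the ordered decision problem without any automata or probability arithmetic; the paper's route buys the counting result and uniformity with the rest of its toolkit at no extra cost. One further (inessential) difference: the paper actually proves a slightly stronger unordered statement --- PTIME for mixed $\prxml^{\muxind}$ documents in which no $\ind$ node is a child of a $\mux$ node --- whereas you treat $\ind$ and $\mux$ separately, which is all the theorem requires; your contributor taxonomy (forced regular children, optional $\ind$-reachable regulars, at-most-one $\mux$ chains) would in fact extend to that mixed setting with no new ideas.
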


\begin{proof}
  For ordered documents, the result follows from
  Theorem~\ref{thm:possctotmuxinddrm}, so we only prove the claim that
  $\poss{\emp}{\ind}$ and $\poss{\emp}{\mux}$ can be decided in PTIME.
  
  We show a stronger result, namely: the $\poss{\emp}{\muxind}$ problem can be
  decided in PTIME under the assumption that no $\ind$ node is a child of a
  $\mux$ node. Note that under this assumption, subtrees of $D$ rooted at nodes
  that are not $\ind$ nodes only have possible worlds that are (possibly empty)
  subtrees (by contrast, $\ind$ nodes may have possible worlds that are
  forests). We say that a node of $D$ is \emph{non-$\ind$} if it is a regular
  node or a $\mux$ node.
  
  We will present a dynamic algorithm to decide $\poss{\emp}{\muxind}$ in PTIME
  under this assumption. We first compute bottom-up, for every non-$\ind$
  node~$n$ of $D$, a Boolean value $e(n)$ indicating whether the subtree of $D$
  rooted at $n$ has an empty possible world. If $n$ is a regular node, we define
  $e(n) = \false$. If $n$ is a $\mux$ node, we define $e(n) = \true$ if the
  probabilities of $n$ sum up to $<1$, or if one child $n'$ of $n$ is such that
  $e(n') = \true$. It is clear that this computation can be performed in
  polynomial time.

  The algorithm will now compute bottom-up, for every pair $(n, n')$ of a
  non-$\ind$ node~$n$ in~$D$ and a node~$n'$ in~$W$, a Boolean value $c(n, n')$
  indicating whether or not the subtree of $W$ rooted at $n'$ is a possible
  world of the subtree of $D$ rooted at $n$.

  If $n$ is a regular leaf, we define $c(n, n') = \true$ if $n'$ is a leaf with
  the same label as $n$, and $c(n, n') = \false$ otherwise. Note that we can
  assume without loss of generality that all of $D$'s leaves are regular nodes,
  as leaves that are probabilistic nodes can simply be removed.

  If $n$ is a $\mux$ node, we define $c(n, n') = \true$ if one of the
  children~$x$ of $n$ is such that $c(x, n')$ is $\true$, otherwise $c(n, n') =
  \false$. Observe that this is correct because the children of $n$ are either
  $\mux$ nodes or regular nodes (they cannot be $\ind$ nodes), so the possible
  worlds of $n$ are exactly the possible worlds of its children (possibly in
  addition to the empty subtree), and those possible worlds must be subtrees and not forests.

  If $n$ is an internal regular node of $D$, to define $c(n, n')$, we first
  check if $n$ and $n'$ have the same label. If they do not, we define $c(n, n')
  = \false$; otherwise we continue.

  Consider $D$ the set of the topmost non-$\ind$ descendants of $n$. We say that
  a node~$x$ of~$D$ is \emph{optional} if there is an $\ind$ node on the path
  from~$n$ to~$x$, or if $e(x) = \true$. In other words, a node $x$ is optional
  if there is a valuation (of $\ind$ nodes) that discards it, or if there is a
  valuation of the subtree rooted at $x$ which achieves an empty possible world
  for this subtree. This implies that, because the probabilistic choices are local
  and independent, we have a way to keep or delete every optional node of $D$
  \emph{independently of each other}. Call $D'$ the set of the children of $n'$
  in~$W$.

  Now if $|D| < |D'|$ we define $c(n, n') = \false$ (because in no possible
  world can $n$ have sufficiently many children to match $n'$ -- remember that
  the possible worlds of the subtrees of $D$ rooted at non-$\ind$ nodes must be
  (possibly empty) subtrees but cannot be forests). Otherwise, add $|D|-|D'|$
  \emph{dummy nodes} to $D'$ so that $|D'| = |D|$. Build a bipartite
  graph~$G_{n,n'} = (D, D', E)$ with edges $E$ defined as follows:
  an edge between $x \in D$ and non-dummy $x' \in D'$ if and only if $c(x, x')$ is $1$, and
  an edge between $x$ and dummy $x'$ if and only if $x$ was optional. (Intuitively: dummy
  nodes of~$D'$ represent the choice of deleting a node of $D$.)

  We now claim that we should define $c(n, n') = \true$ if and only if $G_{n, n'}$ has a perfect
  matching. To see why, observe first that $c(n, n')$ should be $\true$ if and only if the subtree
  of~$W$ rooted at~$n'$ is a possible world of the subtree of~$D$ rooted at~$n$,
  which, because $n$ is a regular node and the labels of $n$ and $n'$, amounts
  to saying that $D'$ is a possible world of~$D$. Observe now that for any
  subset $S$ of $E$ such that each vertex of $D$ has exactly one incident edge,
  $S$ describes a possible world of $D$: each node of $D$ can achieve the node
  of $D'$ to which it is thus matched (or, for dummy nodes, the empty subtree),
  because choices on the nodes of~$D$ (and their descendants, or at their parent
  edge in the case of deletions using an $\ind$ node) are independent between
  nodes of~$D$. Now, a perfect matching describes a possible world of $D$
  achieving exactly $D'$ (with no repetitions), and conversely if $D'$ is a
  possible world of $D$ it must be achieved by certain nodes of $D$ realizing
  the nodes of $D'$ (each node of $D'$ being realized exactly once), and the
  others being deleted (each one being matched to one of the dummy nodes) so
  $G_{n, n'}$ must have a perfect matching.

  Now, the existence of a perfect matching for the bipartite graph $G_{n, n'}$
  can be decided in PTIME (using, e.g., the Hopcroft-Karp algorithm), so we can
  decide how to define $c(n, n')$ in PTIME (with a fixed polynomial not
  dependent on~$n$ or~$n'$).

  Hence, we can decide in PTIME whether $W$ is a possible world of $D$,  by
  checking if $c(r, r')$ is $\true$, with $r$ and $r'$ the roots of $D$ and $W$
  (remember that $r$ is assumed not to be a probabilistic node). This dynamic
  algorithm considers a quadratic number of pairs, and performs a
  polynomial-time computation (with a fixed polynomial) for each of them, so its
  overall running time is polynomial.
\end{proof}

\begin{theorem}
  \label{thm:possempmuxinddrm}
  $\poss{\emp}{\inddrm}$, $\poss{\emp}{\muxdrm}$ and $\poss{\emp}{\muxind}$ are
  NP-complete, even when $D$~has height $4$.
\end{theorem}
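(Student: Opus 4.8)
The plan is to prove NP-completeness by (i) observing membership and (ii) giving one hardness reduction per class, from set cover and from Boolean satisfiability, each producing an instance of height~$4$. Membership in NP is inherited from Prop.~\ref{prop:uposs-upossc}: each of $\prxml^{\inddrm}$, $\prxml^{\muxdrm}$, $\prxml^{\muxind}$ is subsumed (under the PTIME rewritings of Fig.~\ref{fig:hier}) by $\prxml^{\fie}$, so $\poss{\top}{\fie} \in \text{NP}$ suffices. Alternatively, and more directly, one can guess in polynomial size the outcome of every local choice (which $\ind$ edge to keep, which $\mux$ child to keep), evaluate $D$ to a deterministic document $D'$, and test $D' = W$ in PTIME using the bottom-up unordered-tree isomorphism check from the proof of Thm.~\ref{thm:possmuxind}. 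The real work is therefore in the three reductions.

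For $\poss{\emp}{\inddrm}$ and $\poss{\emp}{\muxdrm}$ I would reduce from the (exact) set cover variant X3C: a universe $U$ with $|U| = 3q$ and a family of $3$-element subsets of $U$, asking for a subfamily that partitions $U$. Let $W$ be a root labelled $\top$ with one leaf per element of $U$, using distinct labels. The crucial point is that a $\drm$ node forces all of its children to appear, so each set $S_j$ is realized as a $\drm$ node whose children are the three element-leaves of $S_j$; once probabilistic nodes are contracted, those kept leaves reattach directly under the root. For $\inddrm$ I place all these $\drm$ nodes under a single $\ind$ child of the root (edge probabilities $1/2$): keeping a subset of them selects a subfamily, and the resulting multiset of leaves equals $W$ exactly iff the subfamily is an exact cover. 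For $\muxdrm$, which lacks $\ind$, I instead give the root $q$ identical $\mux$ children, each able to pick any one of the $\drm$ set-nodes (positive edge probabilities summing to $<1$); a match then forces the $q$ chosen sets to be pairwise disjoint and to cover $U$, which is again an exact cover. Both gadgets have height~$4$.

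For $\poss{\emp}{\muxind}$ I would reduce from SAT, exploiting exactly the configuration that Thm.~\ref{thm:possmuxind} had to forbid, namely $\ind$ nodes below $\mux$ nodes. Given a formula with clauses $C_1, \dots, C_n$, let $W$ be a root labelled $\top$ with one witness-leaf $K_j$ (distinct label) per clause. For each variable I create a $\mux$ child of the root with two branches, one for the value $\true$ and one for $\false$; each branch is an $\ind$ node whose children are the witness-leaves $K_j$ of the clauses that this literal satisfies. Because the $\mux$ commits each variable to a single branch, the encoded assignment is automatically consistent, and because $\ind$ may keep any subset (so duplicate witnesses can be dropped), the produced document equals $W$ iff every $K_j$ is offered by at least one chosen branch, i.e. iff the assignment satisfies every clause. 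Hence $W$ is a possible world iff the formula is satisfiable, and this gadget again has height~$4$.

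The main obstacle is the $\muxind$ case: unlike the $\cie$ reduction of Prop.~\ref{prop:posscie}, local models provide no correlation mechanism, so a variable's value cannot be shared across clauses. The plan resolves this by pushing \emph{consistency} into the per-variable $\mux$ (a single atomic choice) and pushing \emph{clause satisfaction} into the exact bag-semantics match at the root; I would need to verify carefully that no spurious leaf can ever appear and that each clause-witness can be produced exactly once. A secondary point, common to all three reductions, is to check that every edge probability can be taken strictly in $(0,1)$ as the model requires, which is immediate since possibility only asks for probability~$>0$.
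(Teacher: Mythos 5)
Your proposal is correct, and its overall skeleton coincides with the paper's proof: NP membership via Prop.~\ref{prop:uposs-upossc}, hardness of $\poss{\emp}{\inddrm}$ from (exact) set cover with a root--$\ind$--$\drm$--leaves gadget of height~$4$, and hardness of $\poss{\emp}{\muxind}$ from SAT with exactly the paper's gadget (per-variable $\mux$ over two literal-$\ind$ branches whose children are clause witnesses, with $W$ requiring each witness exactly once). The one point where you genuinely diverge is $\poss{\emp}{\muxdrm}$: the paper gets it for free by observing that in its $\inddrm$ gadget every $\ind$ node has exactly one child, so each $\ind$ can be replaced verbatim by a $\mux$ with a single child of probability~$1/2$ (a one-child $\mux$ with sum of probabilities $<1$ is just a keep-or-drop switch); you instead build a new ``slot'' gadget with $q$ identical $\mux$ children each offering a copy of every $\drm$ set-node. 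Your construction is also sound --- since $|U|=3q$ and duplicates are forbidden by the exact bag-semantics match, a realization forces all $q$ slots to pick pairwise-disjoint $3$-sets covering $U$, and conversely any X3C solution has exactly $q$ sets --- but note that it leans on the X3C normalization to pin down the count (with general exact cover you would additionally need the choose-none option, i.e.\ probabilities summing to $<1$, since covers may have fewer than $n$ sets), and it costs a quadratic blow-up from duplicating the $\drm$ gadgets across slots, whereas the paper's substitution trick is a one-line adaptation reusing the $\inddrm$ instance unchanged. Finally, the verification you flag as remaining for the $\muxind$ case discharges exactly as in the paper: every leaf of $D$ is a clause witness $l_j$, so no spurious label can appear, and the edge probability $1/2$ on each $\ind$ child lets you keep precisely one occurrence of each witness offered by the chosen branches and drop the rest, giving both directions of the equivalence with satisfiability.
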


\begin{proof}
  From Prop.~\ref{prop:uposs-upossc} it suffices to show hardness. 
  Let us first consider $\poss{\emp}{\inddrm}$.
  We show a reduction from the NP-hard~\cite{karp1972reducibility} exact cover problem.

  Consider an exact cover instance $S = \{S_1, \ldots, S_n\}$, where $S_i = 
  \{s^i_1, \ldots, s^i_{n_i}\}$ for all $i$.
  Write $X = \bigcup S = \{v_1, \ldots, v_m\}$.
  The exact cover problem is to decide whether there exists a subset $S'$ of $S$
  such that every element of $X$ occurs in exactly one of the sets of~$S'$. 

  Consider the document $D$ with root labeled $\top$ and $n$ $\ind$ children,
  with the $i$-th child having, for all $i$, only one child (with edge probability
  $1/2$), which is a $\drm$ node, and which has $n_i$ child nodes labeled $s^i_1,
  \ldots, s^i_{n_i}$. The document $W$ has root labeled $\top$ and $|X|$ child
  nodes labeled $v_1, \ldots, v_m$.

  $W$ is a possible world if and only if there is some subset of the $\drm$
  nodes whose union yields exactly $W$ (without duplicates), so that the
  reduction shows hardness.

  \medskip

  To show hardness of $\poss{\emp}{\muxdrm}$, observe that the previous proof
  can be adapted directly by replacing $\ind$ nodes by $\mux$ nodes, as every
  $\ind$ node has exactly one child.

  \medskip

  Let us last consider $\poss{\emp}{\muxind}$. For this problem, we show a
  reduction from Boolean satisfiability. We use the same notations for the input
  instance as in the proof of Prop.~\ref{prop:posscie}. We additionally
  introduce $n$ node labels $l_1, \ldots, l_n$, with label $l_i$ corresponding to
  clause~$C_i$.

  Consider the document $D$ whose root is labeled $\top$ and has $m$ $\mux$
  child nodes, each of them having two $\ind$ children with edge probability
  $1/2$, the probabilities of all edges of the $\ind$ nodes being also $1/2$.
  For all $i$, the first $\ind$ child of the $i$-th $\mux$ node has one child
  labeled $l_j$ for every clause $C_j$ where $x_i$ occurs; the second one
  has one child labeled $l_j$ for every clause $C_j$ where $\neg x_i$
  occurs. The document $W$ has root labeled $\top$ and $n$ children, the $i$-th
  one having label $l_i$.

  We claim that $W$ is a possible world of $D$ if and only if $F = \bigwedge C_i$ is
  satisfiable. To see why, we consider a one-to-one mapping which associates, to
  any valuation $\nu$ of $F$, the outcomes of the $\mux$ nodes obtained by
  selecting the first child (resp.\ the second child) of the $i$-th $\mux$ node
  if $\nu(x_i) = \true$ (resp.\ $\nu(x_i) = \false$): by construction, the
  labels of the remaining $\ind$ nodes are those of the clauses which are true
  under valuation $\nu$ (possibly occurring multiple times). Hence, if there is
  a valuation $\nu$ satisfying $F$, then, selecting the outcomes of the $\mux$
  nodes in this fashion, we can ensure that the remaining regular nodes are the
  $l_1, \ldots, l_n$, so that $W$ is a possible world of~$D$ as we can choose a
  valuation of the $\ind$ nodes that keeps exactly one occurrence of each label.

  Conversely, if $W$ is a possible world of~$D$, the outcome of the $\mux$ nodes
  in any outcome of~$D$ realizing~$W$ gives a valuation $\nu$ under which $F$ is
  satisfied. Indeed, consider such an outcome and valuation $\nu$, and, for any
  clause $C_j$ of $F$, let us show that $C_j$ is satisfied by $\nu$. Because $W$
  is achieved, some node $n$ labeled $l_j$ must have been kept, and it must be
  the descendant of a $\mux$ node $n'$ (say the $i$-th). Either it is a child of
  $n'$'s first child $n'_1$, or of $n'$'s second child $n'_2$. In the first case,
  this means that $\nu(x_i) = \true$ because $n'_1$ was retained, and $n$ being
  a child of $n'_1$ means that $x_i$ occurs positively in $C_j$, so that $C_j$
  is true under $\nu$. The second case is analogous.
\end{proof}

\section{Explicit matches}
\label{sec:condition}

We now attempt to understand how the overall hardness of $\pposs$ is caused by
the difficulty of finding how the possible world $W$ can be \emph{matched} to $D$.

\begin{definition}
  \label{def:candidate}
  A \emph{candidate match} of $W$ in $D$ is an injective mapping $f$ from the
  nodes of~$W$ to the regular nodes of $D$ such that, if $r$ is the root of $W$
  then $f(r)$ is the root of $D$, and if $n$ is a child of $n'$ in $W$ then
  there is a descending path from $f(n)$ to $f(n')$ going only through
  probabilistic nodes.
\end{definition}

Intuitively, candidate matches are possible ways to generate $W$ from $D$,
ignoring probabilistic annotations, assuming we can keep exactly the regular
nodes of $D$ that are in the image of $f$.
There are exponentially many candidate matches in general, so it is natural
to ask whether $\pposs$ is tractable if all matches are explicitly
provided as input:

\begin{definition}
  Given a class $\prxml^{\mathcal{C}}$ and $o \in \{\bot, \emp, \tot, \top\}$, the $\pposs$ problem with
  \emph{explicit matches}
  $\posse{o}{\mathcal{C}}$ is the same as the
  $\poss{o}{\mathcal{C}}$ problem except that the set of the candidate matches of
  $W$ in $D$ is provided as input (in addition to $D$ and $W$).
\end{definition}

We study the explicit matches variant as a natural generalization of situations
where the ways to match the possible world~$W$ to the document~$D$ are not too
numerous and can be computed efficiently. For instance, if we assume that node
labels in $W$ are unique, so that there is no ambiguity about how to match~$W$
to~$D$, then we are within the scope of the explicit matches variant, as the
(unique) candidate match can be computed in polynomial time. The same applies to
the situation where we only assume that no two sibling nodes carry the same
label, or to more general settings where the possible matches can be identified
easily. Requiring the possible matches to be provided as input is just a way to
formalize that we are not accounting for the complexity of locating those
matches.

We first note that explicit matches ensure tractability of all local
dependency models, by reduction to deterministic tree
automata~\cite{cohen2009running}, this time also for unordered
documents. Intuitively, we can consider all candidate matches separately and
compute the probability of each one, in which case
no label ambiguity remains so any order can be imposed:

\begin{theorem}
  \label{thm:posscemuxinddrm}
  $\possce{\top}{\muxinddrm}$ can be solved in polynomial time.
\end{theorem}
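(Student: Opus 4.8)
The plan is to handle the explicitly provided candidate matches one at a time, using each match to kill the label ambiguity that made the unordered computation variant intractable (Thm~\ref{thm:posscemp}), and then to invoke the ordered result of Thm~\ref{thm:possctotmuxinddrm} as a black box. Fixing a candidate match $f$ (Def~\ref{def:candidate}) with image $I$ (the set of regular nodes of $D$ onto which $f$ sends the nodes of $W$), I would relabel the nodes of $I$ with fresh pairwise-distinct labels and relabel $W$ accordingly, obtaining a document $D_f$ and a target $W_f$ in which every label of $W_f$ occurs on exactly one regular node of $D_f$. In $D_f$ there is then no ambiguity left about how $W_f$ must be matched, so I may fix a total order (the given one if $D$ is ordered, an arbitrary one otherwise) and carry it over to $W_f$, turning the instance into an ordered $\prxml^{\muxinddrm}$ instance.

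For this ordered instance, Thm~\ref{thm:possctotmuxinddrm} computes $D_f(W_f)$ in polynomial time, and I would argue that $D_f(W_f)$ equals the probability $P_I$ that a sample of $D$ keeps exactly the regular nodes of $I$ (and discards every other regular node). Indeed, the only way $D_f$ can produce $W_f$ is to retain precisely the uniquely-labelled nodes of $I$ in the right configuration: any retained node outside $I$ carries a label absent from $W_f$ and spoils the match, and imposing an order is harmless because the unique labels force a single correspondence. Thus each candidate match yields, in polynomial time, the probability $P_I$ that its image is kept exactly.

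It then remains to aggregate. The key observation is that, ranging over all \emph{distinct} images $I$ arising from the candidate matches, the events ``the kept regular nodes are exactly $I$'' are pairwise disjoint, and their union is exactly the event that the outcome equals $W$; hence $D(W) = \sum_I P_I$ summed over distinct images. Since the candidate matches are part of the input there are only polynomially many of them, the distinct images can be extracted in polynomial time, and the overall computation is therefore polynomial.

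I expect the main obstacle to be this aggregation step rather than the per-match computation. Concretely, two distinct candidate matches can share the same image $I$ whenever $W$ has a non-trivial automorphism and the relevant part of $D$ is symmetric (for instance two equally-labelled sibling subtrees of $W$ matched to two interchangeable regular nodes of $D$); naively summing $P_I$ once per match would then overcount $D(W)$. The care needed is therefore to sum over distinct images rather than over matches, and to justify the disjointness-and-covering claim that makes the per-image probabilities additive. By contrast, the per-match reduction to Thm~\ref{thm:possctotmuxinddrm} is routine once the fresh relabelling has removed all label ambiguity.
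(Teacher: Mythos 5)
Your proposal is correct and follows essentially the same route as the paper's proof: fix one candidate match~$f$, relabel its image (and $W$ correspondingly) with fresh pairwise-distinct labels to destroy all matching ambiguity, impose an arbitrary sibling order transported through~$f$, compute the per-match probability via Thm.~\ref{thm:possctotmuxinddrm}, and sum over mutually exclusive events. The one point where you genuinely diverge is the aggregation step, and your instinct there is sound---in fact sounder than the paper's own phrasing. The paper sums ``the probability of every candidate match being realized'' and declares these events mutually exclusive; but two distinct matches sharing the same image~$I$ define \emph{literally the same} event (``the kept regular nodes are exactly~$I$''), so a verbatim sum over matches double-counts in precisely the automorphic situations you describe. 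Concretely: take $W$ a root with two children labeled~$a$, and $D$ the identical deterministic tree; there are two candidate matches with a common image, $D(W)=1$, yet the per-match probabilities sum to~$2$. Your sum over \emph{distinct images} is the correct reading (the paper's appendix proof has the same loose phrasing). Note that in the ordered case the deduplication is vacuous, since an order- and structure-preserving bijection of $W$ onto a fixed image is unique, so the discrepancy can only arise for~$\emp$.

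Two smaller points you leave implicit and should button up. First, you must also discard matches that do not preserve node labels: Definition~\ref{def:candidate} as stated imposes no label condition, and for a label-violating match your relabeled instance still computes a positive $P_I$, which counts outcomes whose world is \emph{not}~$W$. Second, the disjointness-and-covering claim you flag as the main obstacle has a short justification specific to $\prxml^{\muxinddrm}$: deletion only ever removes whole subtrees, so the set of kept regular nodes determines the resulting world, whence (i)~for a label-preserving match with image~$I$, the event ``kept set $=I$'' forces the world to be~$W$ (the path-through-probabilistic-nodes condition pins down the induced tree), and (ii)~conversely any outcome yielding~$W$ induces a label-preserving candidate match whose image is the kept set---a regular node strictly between two matched nodes would either be kept (breaking the parent--child relation in the world) or discarded (which would discard the lower matched node with it). With these additions your argument is complete.
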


\begin{proof}
  We prove the theorem using the results of~\cite{cohen2009running}. An
  alternative, stand-alone proof is given in
  Appendix~\ref{apx:posscemuxinddrmproof}.

  We say that a candidate match $f$ is realized if we are in \emph{the} possible
  world where the regular nodes of $D$ that are kept are exactly those of the
  image of $f$. Hence, we can compute the probability of $W$ by summing the
  probability of every candidate match being realized (because these events are
  mutually exclusive).
  
  Now, to compute the probability of a candidate match~$f$, replace the labels
  of nodes of $W$ by unique labels (yielding $W'$) and
  replace the labels of every node $n$ of $D$ in the image of $f$ by the label
  of $f^{-1}(n)$ in $W'$, to obtain a probabilistic document~$D'$. The
  probability of $f$ being realized is $D'(W')$. Importantly, if $D$ and $W$ are
  unordered, we can make $D'$ and $W'$ ordered by choosing any order on sibling
  nodes in $D'$, and apply the same order (following $f^{-1}$) to sibling nodes
  in $W$; this works because the way to match $W$ to $D$ is fully specified by
  $f$ so there is no matching ambiguity when imposing this order.

  This concludes the proof, because $D'$ and $W'$ are computable in polynomial
  time and $D'(W')$ can be computed by a deterministic tree automaton as in the
  proof of Theorem~\ref{thm:possctotmuxinddrm}.
\end{proof}

For long-distance dependencies, however, it is easily seen that $\pposs$ is
still hard with conjunction of events, even if explicit matches are provided:

\begin{theorem}
  \label{thm:possecie}
  $\posse{\bot}{\cie}$ is NP-complete, even when $D$ has height $3$.
\end{theorem}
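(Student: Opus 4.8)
The plan is to prove both NP-membership and NP-hardness, the key realization being that the explicit-matches variant is no easier than the plain variant here, because the hardness instance will admit a single, trivially computable candidate match.

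For NP-membership, I would invoke Proposition~\ref{prop:uposs-upossc}: every $\prxml^{\cie}$ document is a fortiori a $\prxml^{\fie}$ document (a conjunction of events being a particular Boolean formula), so the NP algorithm for $\poss{\top}{\fie}$---guess a valuation of the events and verify that the induced outcome equals~$W$---also witnesses that $\poss{\bot}{\cie}$ is in NP. Supplying the set of candidate matches as additional input cannot change this, since that algorithm may simply ignore it. Hence $\posse{\bot}{\cie}$ is in NP.

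For hardness, I would reuse the reduction from Boolean satisfiability given in the proof of Proposition~\ref{prop:posscie}: from a CNF formula $F$, build the document $W$ consisting of a single root labeled~$\top$, and the $\prxml^{\cie}$ document $D$ of height~$3$ with root~$\top$, one $\cie$ child, and $n$ children labeled~$\bot$ whose parent edges carry the conjunctions~$C_i'$. The only new ingredient relative to Proposition~\ref{prop:posscie}---and the crucial observation---is that, since $W$ consists of a single node, Definition~\ref{def:candidate} forces the unique candidate match to send the root of~$W$ to the root of~$D$. Thus the set of candidate matches is a singleton that can be computed in constant time and handed to the $\posse{\bot}{\cie}$ instance without affecting correctness. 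The equivalence ``$W$ is a possible world of~$D$ iff $F$ is satisfiable'' then holds exactly as argued in Proposition~\ref{prop:posscie}, and $D$ still has height~$3$.

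The point I expect to be central is therefore not a genuine obstacle but this structural observation: because the known hardness instance already has a trivial matching structure (its possible world being a bare root), the explicit-matches restriction gives an algorithm no leverage, and NP-hardness transfers immediately. The subtlety to handle cleanly is simply to verify that the $W$ used in Proposition~\ref{prop:posscie} really induces exactly one candidate match under Definition~\ref{def:candidate}, so that the hand-provided match list is both correct and complete.
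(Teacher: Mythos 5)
Your proof is correct and follows exactly the paper's own argument: the paper proves this theorem by citing the reduction of Proposition~\ref{prop:posscie} and noting that those instances admit only one trivial candidate match, which is precisely your central observation. Your additional spelled-out details (NP-membership via Proposition~\ref{prop:uposs-upossc} with the match list ignored, and the singleton match forced by Definition~\ref{def:candidate} since $W$ is a bare root) are accurate elaborations of the same route.
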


\begin{proof}
  From the proof of Prop.~\ref{prop:posscie}, noticing that there is only
  one (trivial) match of $W$ in~$D$ for the instances considered in the
  reduction.
\end{proof}

This being said, it turns out that the hardness is really caused by event 
\emph{conjunctions}. To see this, we introduce the $\prxml^{\mie}$ class,
which allows only \emph{individual} events:

\begin{definition}
  The $\prxml^{\mie}$ class features \emph{multivalued independent events}
  taking their values from a finite set $V$ (beyond $\true$ and
  $\false$, with probabilities summing to~$1$), and 
  probabilistic $\mie$ nodes whose child edges are annotated by a \emph{single} event $e$ and
  a value $x \in V$. A $\mie$ node cannot be the child of a $\mie$ node. When
  evaluating $D$ under a valuation $\nu$, child edges of $\mie$ nodes labeled
  $(e, x)$ should be kept if and only if $\nu(e) = x$.
\end{definition}

Note that $\mie$ hierarchies are forbidden (because they can straightforwardly
encode conjunctions), so that $\prxml^{\mie}$ does not capture $\ind$
hierarchies. However, as we introduced it with multivalued (not just Boolean)
events, it captures $\prxml^{\mux}$:

\begin{proposition}
  We can rewrite $\prxml^{\mux}$ to $\prxml^{\mie}$ and
  $\prxml^{\mie}$ to $\prxml^{\cie}$ in PTIME.
\end{proposition}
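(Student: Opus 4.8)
The plan is to give two distribution-preserving polynomial-time rewritings. For the rewriting $\prxml^{\mux} \to \prxml^{\mie}$, the guiding idea is that a nested tree of $\mux$ nodes amounts to a single mutually-exclusive choice among the regular nodes on its frontier, so it can be captured by one multivalued event. For the rewriting $\prxml^{\mie} \to \prxml^{\cie}$, the idea is to simulate each multivalued event by a block of fresh Boolean events using a ``first-success'' encoding, turning each edge label $(e,x)$ into a conjunction of literals.

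For $\prxml^{\mux} \to \prxml^{\mie}$, the obstacle is that $\mie$ hierarchies are forbidden, so I cannot simply replace each $\mux$ node by a $\mie$ node (nested $\mux$ nodes would yield forbidden nested $\mie$ nodes). Instead I would flatten. For every regular node $v$ and every $\mux$ child $m$ of $v$, consider the maximal subtree of $\mux$ nodes hanging below $m$, and let its \emph{frontier} be the set of regular nodes $c$ reachable from $m$ by a path passing only through $\mux$ nodes (except for the endpoint $c$). Each such $c$ is reached by a unique path, whose product of edge probabilities I call $p_c$; since $\mux$ weights sum to at most $1$ at every node, $\sum_c p_c \le 1$. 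I replace the whole $\mux$ subtree by a single $\mie$ node $m^\ast$, still a child of $v$, whose children are exactly the frontier nodes $c$ (with their original subtrees, processed recursively). I introduce one fresh multivalued event $e_m$ with a value $x_c$ of probability $p_c$ for each frontier node and one ``spare'' value of probability $1 - \sum_c p_c$, and label the edge to $c$ with $(e_m, x_c)$. A short computation shows that the probability that this subtree yields the subtree at $c$ (resp. the empty tree) equals exactly $p_c$ (resp. $1 - \sum_c p_c$) both in $D$ and in the encoding, so the distributions agree. Because $m^\ast$ is a child of a regular node and all its children are regular, no $\mie$ hierarchy is created, and the construction is clearly PTIME (one traversal per $\mux$ subtree to compute the $p_c$).

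For $\prxml^{\mie} \to \prxml^{\cie}$, I would relabel each $\mie$ node as a $\cie$ node and encode every multivalued event separately. Consider an event $e$ with values $v_1, \dots, v_k$ of probabilities $\pi_1, \dots, \pi_k$ (discarding values of probability $0$, so that all $\pi_i > 0$ and $\sum_i \pi_i = 1$). I introduce fresh Boolean events $b^e_1, \dots, b^e_{k-1}$ with $\pi(b^e_i) = \pi_i / \sum_{j \ge i} \pi_j$, and relabel the edge previously carrying $(e, v_i)$ with the conjunction $\neg b^e_1 \wedge \cdots \wedge \neg b^e_{i-1} \wedge b^e_i$ for $i < k$, and with $\neg b^e_1 \wedge \cdots \wedge \neg b^e_{k-1}$ for $i = k$. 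A telescoping product shows that the conjunction for $v_i$ holds with probability exactly $\pi_i$, and the conjunctions are pairwise mutually exclusive (those for $i \ne i'$ disagree on the sign of $b^e_{\min(i,i')}$), so each valuation of the $b^e_j$ selects exactly one value of $e$, matching the multivalued semantics. Using disjoint Boolean events for distinct multivalued events preserves independence. Each event contributes at most $k$ Boolean events and conjunctions of length at most $k$, with rational probabilities computable by PTIME arithmetic, so the rewriting is polynomial.

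The main obstacle is the $\prxml^{\mux} \to \prxml^{\mie}$ direction: the prohibition on $\mie$ hierarchies rules out a naive node-by-node translation, and the crux is to argue that collapsing a $\mux$ hierarchy into one multivalued choice over its frontier preserves the probability distribution — in particular that the ``empty'' probability mass is accounted for correctly by the spare value.
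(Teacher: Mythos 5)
Your proof is correct, and it differs from the paper's in both halves, though in each case by a modest inlining or specialization rather than a new idea. For $\prxml^{\mux} \to \prxml^{\mie}$, the paper first invokes a known lemma (Lemma~5.1 of the Abiteboul--Kimelfeld--Sagiv--Senellart expressiveness paper) to eliminate $\mux$ hierarchies in PTIME, and only then performs the per-node replacement by a $\mie$ node with one fresh event and a spare outcome; you instead carry out the flattening yourself, collapsing each maximal $\mux$ subtree onto its regular frontier with path-product probabilities $p_c$ and a spare value of mass $1 - \sum_c p_c$. Your version is self-contained (no external citation) at the cost of having to argue distribution preservation directly, which you do correctly: the key points --- that at most one frontier subtree survives, that the empty-outcome mass is exactly $1-\sum_c p_c$, and that distinct maximal $\mux$ subtrees are independent --- are all in place. (Two trivial points to tidy: omit the spare value when $\sum_c p_c = 1$, and state explicitly that edges labeled with discarded zero-probability values are deleted along with their subtrees.) For $\prxml^{\mie} \to \prxml^{\cie}$, the paper encodes a $k$-valued event via a \emph{binary decision tree} with $O(k)$ Boolean events and conjunctions of length $O(\log_2 k)$, whereas your ``first-success'' (stick-breaking) encoding with $\pi(b^e_i) = \pi_i / \sum_{j \ge i} \pi_j$ is precisely the degenerate case of that construction where the decision tree is a chain: it yields conjunctions of length up to $k-1$ instead of $O(\log_2 k)$. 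Since $k$ is bounded by the input size, both are polynomial and the difference is immaterial for the PTIME claim; your telescoping-product verification and the mutual-exclusivity argument via disagreement on $b^e_{\min(i,i')}$ are exactly the needed correctness facts, so the encoding is valid --- the paper's balanced tree buys shorter conjunctions, while yours buys a more elementary, fully explicit probability computation.
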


\begin{proof}
  To rewrite $\prxml^{\mux}$ to $\prxml^{\mie}$, first rewrite the input
  $\prxml^{\mux}$ document to a $\prxml^{\mux}$ document with no $\mux$
  hierarchies (no $\mux$ node is a child of a $\mux$ node); this can be done in
  polynomial time (\cite{abiteboul2009expressiveness}, Lemma~5.1). Next,
  introduce one event per $\mux$ node and one outcome for this event per child
  of the $\mux$, with one additional outcome (to make the probabilities sum
  to~$1$) if the original probabilities of the $\mux$ child edges summed
  to~$<1$. Replace each $\mux$ node by a $\mie$ node, where every child edge of
  the $\mie$ node is labeled by the event introduced for this $\mux$ node and
  the value introduced for the outcome where this child edge is kept. The
  absence of $\mux$ hierarchies ensures that the requirement on the absence of
  $\mie$ hierarchies is respected.

  To rewrite $\prxml^{\mie}$ to $\prxml^{\cie}$, we claim that every
  multivalued event $e$ with $k$ outcomes can be replaced by a set $S_e$ of $O(k)$ Boolean
  events such that each outcome $e = x_i$ can be represented by a conjunction of
  $O(\log_2 k)$ events of $S_e$, those conjunctions having the
  same probability as their original outcome and forming a partition of all
  outcomes of events in $S_e$. Assuming that this claim holds, the
  $\prxml^{\mie}$ document can be rewritten in polynomial time to $\prxml^{\cie}$ by
  performing this encoding for all multivalued events, and replacing every
  $\mie$ node by a $\cie$ node and replacing each child edge labeled $(e, x_i)$
  by a child edge labeled with the corresponding conjunction.

  Now, to see why the claim is true, given a multivalued event $e$, observe that
  we can build a binary decision tree $T_e$ of the outcomes of $e$. Hence, we
  can introduce one Boolean event per internal node of $T_e$, and choose its
  probability according to that of its two child edges in~$T_e$ (the probability of an edge $a$ in
  $T_e$ being the total probability of the outcomes reachable from the target
  of~$a$, normalized by that of the outcomes reachable from the origin of~$a$).
  Hence, we associate to each outcome $x_i$ of $e$ the conjunction of Boolean
  choices leading to $x_i$ in $T_e$: it has the right probability by
  construction, and, for every valuation of the Boolean events, exactly one
  conjunction is true (the one corresponding to the leaf of $T_e$ selected by
  following those choices). Now, as $T_e$ is a binary tree with $k$~leaves (the
  number of outcomes of $e$), it has $O(k)$ internal nodes and its height is
  $O(\log_2 k)$, which proves the
  claim and completes the proof.

  Observe that $\prxml^{\mie}$ does \emph{not} capture $\prxml^{\muxdrm}$; a
  proof of this fact is given in Appendix~\ref{apx:rewriteproof}.
\end{proof}

In the $\prxml^{\mie}$ class, the $\pposs$ problem is still NP-hard, by
reduction to exact cover; however, with explicit matches, the $\ppossc$
problem is tractable, both in the ordered and unordered setting, despite the
long-distance dependencies. Intuitively, the candidate matches are mutually
exclusive, and each match's probability can be computed as that of a conjunction
of equalities and inequalities on the events at the frontier.

\begin{theorem}
  \label{thm:possmie}
  $\poss{\bot}{\mie}$ is NP-complete, even when $D$ has height $3$ and events
  are Boolean.
\end{theorem}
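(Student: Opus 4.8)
The plan is to establish NP-membership from the bounds already proved, and NP-hardness by a reduction from exact cover designed to respect document order.

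For membership, I would note that a $\prxml^{\mie}$ document can be rewritten in polynomial time to a $\prxml^{\cie}$ (hence $\prxml^{\fie}$) document by the preceding proposition, so $\poss{\bot}{\mie}$ reduces to $\poss{\bot}{\fie}$, which is in NP by Prop.~\ref{prop:uposs-upossc}. Equivalently, one guesses a valuation of the events and checks in polynomial time that the induced deterministic document equals $W$, as in the proof of that proposition.

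For hardness, I would reduce from the NP-hard exact cover problem~\cite{karp1972reducibility}: given sets $S_1, \ldots, S_n$ over a universe $X = \{v_1, \ldots, v_m\}$, decide whether some subfamily partitions $X$. I introduce one Boolean event $e_i$ per set $S_i$, each with probability $1/2$, whose intended meaning is ``$S_i$ is selected''. The document $W$ is the root labeled~$\top$ with $m$ children labeled $v_1, \ldots, v_m$. The natural first attempt for $D$ --- one $\mie$ node per set, whose children are that set's elements, all guarded by $(e_i, \true)$ --- correctly encodes exact cover for \emph{unordered} documents, since a valuation yields $W$ exactly when the selected sets cover each element once.

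The main obstacle is the ordered case: this set-grouped encoding produces the elements of each selected set \emph{consecutively}, so even a valid exact cover can yield the elements in the wrong order. To fix this, I would group by \emph{element slot} rather than by set, exploiting that an event may be shared across several $\mie$ nodes (only the nesting of $\mie$ nodes is forbidden, not event reuse). Concretely, $D$ is the root~$\top$ with children $N_1, \ldots, N_m$ in order, where each $N_j$ is a $\mie$ node whose children are, for every $i$ with $v_j \in S_i$, a regular node labeled $v_j$ guarded by $(e_i, \true)$. Under a valuation $\nu$, node $N_j$ contributes one $v_j$-labeled node per selected set containing $v_j$, so the induced document equals $W$ if and only if each slot $j$ keeps exactly one child, that is, if and only if $\{S_i : \nu(e_i) = \true\}$ is an exact cover. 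Since each $N_j$ sits at the fixed position $j$ and emits only the label $v_j$, whenever a match occurs the children appear in the order $v_1, \ldots, v_m$; hence the reduction is correct for both $\poss{\tot}{\mie}$ and $\poss{\emp}{\mie}$. All events are Boolean and $D$ has height~$3$ (the root, the $\mie$ nodes, and the element nodes), with no $\mie$ node nested under another, as required, which concludes the argument.
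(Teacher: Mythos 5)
Your proposal is correct and follows the paper's proof in essence: NP-membership from Prop.~\ref{prop:uposs-upossc} (made rigorous, as you note, via the PTIME rewriting of $\prxml^{\mie}$ into $\prxml^{\cie}$, or by directly guessing a valuation of the multivalued events), and NP-hardness by the same reduction from exact cover with one Boolean event $e_i$ of probability $1/2$ per set $S_i$, yielding a height-$3$ document valid for both the ordered and unordered variants. The only divergence is the gadget handling order: the paper uses a \emph{single} $\mie$ node whose children are all element copies $s^i_j$ guarded by $(e_i, \true)$, sorted by element label so that any possible world keeping one node per label is automatically in the order of $W$, whereas you partition into $m$ per-element $\mie$ nodes $N_1, \ldots, N_m$ sharing the events $e_i$ across nodes --- an equally valid fix that exploits cross-node event reuse rather than label-sorted children.
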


\begin{proof}
  From Prop.~\ref{prop:uposs-upossc} it suffices to show hardness.
  We show a reduction from exact cover, as in the proof of
  Theorem~\ref{thm:possempmuxinddrm}, with the same notation for the exact cover
  instance (and, intuitively, using for~$D$ and~$W$ the straightforward encoding
  to $\prxml^{\mie}$ of the instances used in this last proof to show hardness
  of $\poss{\bot}{\muxdrm}$ and $\poss{\bot}{\inddrm}$).

  Consider a set of $n$ Boolean events $E = \{e_1, \ldots, e_n\}$ (with values
  in $\{\true, \false\}$ and probabilities $1/2$.
  Consider the document $W$ with one root labeled $\top$ and $m$ children
  labeled $l_1, \ldots, l_m$. Consider the $\prxml^{\mie}$ document~$D$ with one
  root labeled~$\top$ and one $\mie$ child with, for $1 \leq j \leq n$, $n_i$
  child edges labeled~$(e_i, \true)$ leading to children labeled $s^i_1, \ldots,
  s^i_{n_i}$. Order in the input $D$ the child nodes of the root in $W$ from
  $l_1$ to $l_m$, and the child nodes of the root in $D$ from those labeled
  $l_1$ to those labeled $l_m$, the order between those carrying the same labels
  being arbitrary, so that we are showing a reduction either to
  $\poss{\emp}{\mie}$ or to $\poss{\tot}{\mie}$. 

  Now, $W$ is a possible world of $D$ if and only if there is a valuation of
  the events of~$E$ such that, for every $1 \leq j \leq m$, there is exactly one
  node labeled $l_j$ that is retained. This amounts to choosing a subset
  $S'$ of $S$ such that every item of $X$ occurs exactly once in~$\bigcup S'$:
  the set $S'$ corresponds to the set of events of $E$ that are evaluated
  to~$\true$. Hence, $(D, W)$ is a positive instance of $\poss{\emp}{\mie}$ if
  and only if $F$ is satisfiable, so that $\poss{\bot}{\mie}$ is NP-hard.
\end{proof}

\begin{theorem}
  \label{thm:posscemie}
  $\possce{\top}{\mie}$ can be solved in polynomial time.
\end{theorem}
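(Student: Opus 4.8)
The plan is to follow the framework of the proof of Theorem~\ref{thm:posscemuxinddrm}: since the candidate matches are supplied explicitly and their realizations are mutually exclusive events, I will compute $D(W)$ as a sum, over the images of the candidate matches, of the probability that a given image is \emph{exactly} the set of kept regular nodes of an outcome. Matches sharing the same image describe the same event, so I count each distinct image once to avoid double-counting (in the ordered setting each image arises from at most one match, so the point is moot there). The only new ingredient compared with Theorem~\ref{thm:posscemuxinddrm} is to compute, for a single candidate match $f$ with image $I$, the probability that an outcome keeps exactly the regular nodes of $I$, now that the probabilistic choices are governed by shared $\mie$ events rather than by independent local draws.

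The key structural observation is that, because $\mie$ hierarchies are forbidden, a regular node is separated from its nearest regular ancestor by at most one $\mie$ node. Hence I would contract $D$ to its tree $R$ of regular nodes, where each edge of $R$ is either unconditional (a direct regular parent--child edge) or carries a single condition $\nu(e) = x$ (when the child is reached through one $\mie$ node by an edge labeled $(e,x)$). A regular node is kept in an outcome if and only if every $R$-edge on its root path is satisfied. Since $I$ is ancestor-closed (paths between consecutive images of $f$ pass only through probabilistic nodes, so all regular ancestors of a node of $I$ lie in $I$), the event ``the kept set is exactly $I$'' is equivalent to the conjunction of: every $R$-edge internal to $I$ is satisfied, and every frontier $R$-edge (from a node of $I$ to a regular child outside $I$) is unsatisfied. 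Here the no-hierarchy restriction is used crucially: each discarded frontier node sits a single $\mie$-edge below a kept node, so discarding it is a single inequality $\nu(e) \neq x$ rather than a disjunction over several edges (which is exactly what makes the problem hard without this restriction).

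This rewrites the realization of $f$ as a conjunction of single-event equalities (from internal edges, forcing $\nu(e) = x$) and single-event inequalities (from frontier edges, forcing $\nu(e) \neq x$); a frontier edge that is unconditional makes the realization impossible and contributes probability $0$. I then group these constraints by event. Because the $\mie$ events are mutually independent, the probability factorizes over events, and the factor for each event $e$ is computed in closed form: it is $0$ if two internal edges demand different values of $e$ or if an internal edge demands a value forbidden by a frontier edge; it is $\Pr[\nu(e) = x]$ if $e$ is pinned to a single consistent value $x$; it is $1 - \sum_{y} \Pr[\nu(e) = y]$ over the forbidden values $y$ if only inequalities constrain $e$; and it is $1$ if $e$ is unconstrained. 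Multiplying these factors gives the realization probability of $f$, and all of this is rational arithmetic over polynomially many edges.

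Summing these per-match probabilities over the (deduplicated) explicitly given matches then yields $D(W)$ in polynomial time, in both the ordered and unordered settings: the per-event computation is identical, and order only affects which matches are valid. The main obstacle is precisely the structural argument of the second paragraph, namely that forbidding $\mie$ hierarchies turns each discard obligation into a single inequality, so that the whole realization event becomes a conjunction of per-event constraints whose probability factorizes; once this is established, the remaining work is routine bookkeeping, taking care only to detect conflicting constraints on a shared event (which force probability $0$) and to avoid double-counting matches that share an image.
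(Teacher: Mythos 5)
Your proposal is correct and follows essentially the same route as the paper's proof: sum the mutually exclusive realization probabilities of the explicitly given matches (after filtering order-violating ones), express the realization of a match with image $I$ as a conjunction of single-event equalities (for $\mie$ edges inside $I$) and single-event inequalities (for frontier edges to discarded nodes, an unconditional frontier edge forcing probability $0$), then group the atoms by event and multiply the per-event probabilities using independence. Your contraction to the regular-node tree and the explicit per-event case analysis are just more detailed renderings of the paper's argument, and your care in summing over \emph{distinct images} rather than over matches (to avoid double-counting when $W$ has automorphisms) is a small sharpening that the paper leaves implicit.
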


\begin{proof}
  Observe first that, as in the proof of Theorem~\ref{thm:posscemuxinddrm}, 
  the probability that $W$ is realized is that of either of the candidate matches
  being realized, those events being mutually exclusive.
  We assume that, if $W$ and $D$ are ordered, we have checked (in PTIME) that
  candidate matches respect the order (for a candidate match $f$, if $v$ and
  $v'$ are sibling nodes in $W$ such that $v$ comes before $v'$, then $f(v)$
  comes before $f(v')$ in the document order of $D$), and removed those which do
  not.

  Now, consider a candidate match $f$. We must compute the probability $p_f$ that $f$
  is realised, namely, that we are in the possible world where the only regular nodes
  that are kept in $D$ are those of the image $I$ of $f$; we abuse notation so
  that we consider $\mie$ nodes of $D$ to be in $I$ if one of their children is
  in $I$. We will write this probability $p_f$ as that of a conjunction of events:
  the events that all nodes in $I$ are kept, and the events that all nodes not
  in $I$ are discarded.

  The event of all nodes in $I$ being kept can be written as the conjunction
  $c_+$ of all $e_i = x_i$ for every edge $(e_i, x_i)$ between a $\mie$ node
  in~$I$ and a child node also in $I$. Indeed, to keep~$I$, all the conditions
  on edges leading to a node of~$I$ must be respected.

  The event of all nodes not in~$I$ being discarded can be written as a
  conjunction $c_-$ of the same kind, in the following fashion. Consider every
  topmost node $n$ not in $I$. If $n$'s parent~$n'$ is a regular node, then the
  overall probability of the match $f$ is $p = 0$, because if we keep $n'$ then
  we must keep $n$; in this case, we can forget about $f$ altogether. Otherwise,
  we add to $c_-$ the atom $e_i \neq x_i$, where $(e_i, x_i)$ is the label of the
  edge from $n'$ to $n$.

  We now have either eliminated $f$ or obtained (in polynomial time) the conjunction $c =
  c_+ \wedge c_-$ which is necessary and sufficient for the match to hold, the
  atoms of $c$ being of the form $e_i = x_i$ or $e_i \neq x_i$, where the
  $e_i$'s are events and the $x_i$'s are outcomes. Now, we can compute in
  polynomial time
  the probability $p_f$ of $c$. Indeed, regroup the atoms by the probabilistic
  event occurring in them. For each probabilistic event $e$, we consider the (possibly
  empty) subset of outcomes satisfying the atoms for $e$, and compute its total
  probability $p_f^e$. As the choices are independent between events, the overall
  probability $p_f$ of $c$ is the product of the $p_f^e$ over all events~$e$.
\end{proof}

\section{Conclusion}
\label{sec:conclusion}

We have characterized the complexity of the counting and decision variants of
$\pposs$
for unordered or
ordered XML documents, and various $\prxml$ classes.
With explicit matches, $\ppossc$ is tractable unless event conjunctions are
allowed. Without explicit matches, $\pposs$ is hard unless dependencies are
local; in this case, if the documents are ordered, $\ppossc$ is tractable,
otherwise $\ppossc$ is hard and $\pposs$ is tractable only with $\ind$ \emph{or}
$\mux$ nodes (and hard if both types, or $\drm$ nodes, are allowed). Our results
are summarized in Table~\ref{tab:results} on page~\pageref{tab:results}.

We note that,
using our results and via translations between the
probabilistic relational and XML models~\cite{amarilli2013connections}, we can
derive some bounds on the complexity of $\pposs$ for relational databases. In terms
of tractability for the (unordered) relational model, we can deduce the
tractability of the decision formulation of
$\pposs$ for the tuple-independent
model~\cite{LakshmananLRS97,DalviS07} and the block-independent-disjoint
model~\cite{barbaraetal92,re2007materialized}, and the tractability of both the
decision and counting variants on
pc-tables~\cite{green2006models,HuangAKO09} under the assumption that explicit matches are
provided and that tuples are annotated by a single equality constraint on a
multivalued event, in the spirit of $\mie$. We remark, however, that such
results are not hard to prove directly in the relational model. In terms of
intractability, we observe that the translation from XML to relational models
in~\cite{amarilli2013connections} requires the introduction of explicit node IDs
for all nodes of the document, so that this does not translate to a reduction
for the $\pposs$ problem: intuitively, the translation of~$W$ to a relational
table would have to specify the exact node IDs to be matched. We leave as
future work a more complete investigation of~$\pposs$ in the relational
context, or the study of possible alternative translations that provide more
reductions for $\pposs$ from one setting to the other.

Additional directions for future work would be to study more precisely the effect of $\drm$ nodes and $\ind$
hierarchies, for instance by attempting to extend the $\prxml^{\mie}$ class to
capture them, or try to understand whether there is a connection between the
algorithms of~\cite{cohen2009running} and the proof of
Thm.~\ref{thm:possmuxind}. It would also be interesting to determine under which
conditions (beyond unique labels) can candidate matches be enumerated in
polynomial time, so that the $\pposs$ problem reduces to the explicit matches
variant. Last but not least, another natural problem setting is to allow the order on
sibling nodes of $D$ to be partly specified. This question is already covered
in~\cite{cohen2009running}, but only when all of the possible orderings are
explicitly enumerated:
investigating the tractability of $\pposs$ for more compact representations, such as partial
orders, is an intriguing problem.

\vspace{-0.9em}

\paragraph{Acknowledgements.} The author thanks Pierre Senellart for careful
proofreading,
useful suggestions, and insightful feedback, the anonymous referees of AMW~2014
and BDA~2014 for their valuable
comments, and M. Lamine Ba and Tang Ruiming for helpful early discussion. This
work has been partly funded by the French government under the X-Data
project and by the French ANR under the NormAtis project.

\bibliographystyle{abbrv}
\bibliography{main}

\begin{thebibliography}{10}

\bibitem{abiteboul2011capturing}
S.~Abiteboul, T.-H.~H. Chan, E.~Kharlamov, W.~Nutt, and P.~Senellart.
\newblock Capturing continuous data and answering aggregate queries in
  probabilistic {XML}.
\newblock {\em ACM Transactions on Database Systems}, 36(4), 2011.

\bibitem{abiteboul2009expressiveness}
S.~Abiteboul, B.~Kimelfeld, Y.~Sagiv, and P.~Senellart.
\newblock On the expressiveness of probabilistic {XML} models.
\newblock {\em {VLDB} Journal}, 18(5):1041--1064, 2009.

\bibitem{amarilli2014possibility}
A.~Amarilli.
\newblock The possibility problem for probabilistic {XML}.
\newblock In {\em Proc.\ {AMW}}, 2014.

\bibitem{amarilli2013connections}
A.~Amarilli and P.~Senellart.
\newblock On the connections between relational and {XML} probabilistic data
  models.
\newblock In {\em Proc.\ {BNCOD}}, pages 121--134, Oxford, United Kingdom,
  2013.

\bibitem{ba2013merging}
M.~L. Ba, T.~Abdessalem, and P.~Senellart.
\newblock Merging uncertain multi-version {XML} documents.
\newblock {\em Proc.\ {DChanges}}, 2013.

\bibitem{ba2013uncertain}
M.~L. Ba, T.~Abdessalem, and P.~Senellart.
\newblock Uncertain version control in open collaborative editing of
  tree-structured documents.
\newblock In {\em Proc.\ {DocEng}}, pages 27--36, 2013.

\bibitem{barbaraetal92}
D.~Barbar{\'a}, H.~Garcia-Molina, and D.~Porter.
\newblock The management of probabilistic data.
\newblock {\em IEEE Transactions on Knowledge and Data Engineering}, 4(5),
  1992.

\bibitem{barcelo2010xml}
P.~Barcel{\'o}, L.~Libkin, A.~Poggi, and C.~Sirangelo.
\newblock {XML} with incomplete information.
\newblock {\em JACM}, 58(1):4, 2010.

\bibitem{cohen2009running}
S.~Cohen, B.~Kimelfeld, and Y.~Sagiv.
\newblock Running tree automata on probabilistic {XML}.
\newblock In {\em Proc.\ {PODS}}, pages 227--236. ACM, 2009.

\bibitem{DalviS07}
N.~N. Dalvi and D.~Suciu.
\newblock Efficient query evaluation on probabilistic databases.
\newblock {\em VLDB Journal}, 16(4), 2007.

\bibitem{green2006models}
T.~J. Green and V.~Tannen.
\newblock Models for incomplete and probabilistic information.
\newblock In {\em Proc.\ {EDBT} Workshops, {IIDB}}, Mar. 2006.

\bibitem{HuangAKO09}
J.~Huang, L.~Antova, C.~Koch, and D.~Olteanu.
\newblock {MayBMS}: a probabilistic database management system.
\newblock In {\em SIGMOD}, 2009.

\bibitem{karp1972reducibility}
R.~M. Karp.
\newblock {\em Reducibility among combinatorial problems}.
\newblock Springer, 1972.

\bibitem{kharlamov2010updating}
E.~Kharlamov, W.~Nutt, and P.~Senellart.
\newblock Updating probabilistic {XML}.
\newblock In {\em Proc.\ Updates in {XML}}, Lausanne, Switzerland, 2010.

\bibitem{kimelfeld2009query}
B.~Kimelfeld, Y.~Kosharovsky, and Y.~Sagiv.
\newblock Query evaluation over probabilistic {XML}.
\newblock {\em {VLDB} Journal}, 18(5):1117--1140, 2009.

\bibitem{kimelfeld2013probabilistic}
B.~Kimelfeld and P.~Senellart.
\newblock Probabilistic {XML}\string: Models and complexity.
\newblock In Z.~Ma and L.~Yan, editors, {\em Advances in Probabilistic
  Databases for Uncertain Information Management}, pages 39--66.
  Springer-Verlag, 2013.

\bibitem{LakshmananLRS97}
L.~V.~S. Lakshmanan, N.~Leone, R.~B. Ross, and V.~S. Subrahmanian.
\newblock {ProbView}: A flexible probabilistic database system.
\newblock {\em TODS}, 22(3), 1997.

\bibitem{libkin2004elements}
L.~Libkin.
\newblock {\em Elements of Finite Model Theory}.
\newblock Springer, 2004.

\bibitem{re2007materialized}
C.~R{\'e} and D.~Suciu.
\newblock Materialized views in probabilistic databases: for information
  exchange and query optimization.
\newblock In {\em VLDB}, 2007.

\bibitem{suciu2011probabilistic}
D.~Suciu, D.~Olteanu, C.~R\'e, and C.~Koch.
\newblock {\em Probabilistic Databases}.
\newblock Morgan \& Claypool, 2011.

\bibitem{valiant1979complexity}
L.~G. Valiant.
\newblock The complexity of computing the permanent.
\newblock {\em Theoretical computer science}, 8(2):189--201, 1979.

\end{thebibliography}

\appendix

\end{document}